

\documentclass[12pt, draftclsnofoot, onecolumn]{IEEEtran}

\usepackage{cite}

\let\hbarorig\hbar 
\usepackage{amsmath}
\usepackage{amsthm}
\usepackage{amsfonts}
\usepackage{amssymb}
\let\hbar\hbarorig 
\usepackage{balance}
\usepackage[singlelinecheck=on]{caption}
\usepackage{verbatim}

\usepackage[lined,boxed,commentsnumbered]{algorithm2e}
\usepackage{setspace}

\usepackage{multirow}
\usepackage{framed}
\usepackage{enumitem}

\usepackage{graphicx}
\usepackage{epstopdf}
\graphicspath{{fig/}}
\usepackage{calc}
\usepackage{color}
\usepackage[caption=false,font=footnotesize]{subfig}

\newtheorem{theorem}{Theorem}
\newtheorem{lemma}{Lemma}
\newtheorem{remark}{Remark}

\newtheorem{example}{Example}
\newtheorem{definition}{Definition}

\newtheorem{stopping criterion}{stopping criterion}

\newcommand{\beq}{\begin{equation}}
\newcommand{\eeq}{\end{equation}}
\newcommand{\beqnn}{\begin{equation*}}
\newcommand{\eeqnn}{\end{equation*}}
\newcommand{\beqy}{\begin{eqnarray}}
\newcommand{\eeqy}{\end{eqnarray}}
\newcommand{\beqynn}{\begin{eqnarray*}}
\newcommand{\eeqynn}{\end{eqnarray*}}
\newcommand{\bit}{\begin{itemize}}
\newcommand{\eit}{\end{itemize}}
\newcommand{\ben}{\begin{enumerate}}
\newcommand{\een}{\end{enumerate}}
\newcommand{\bex}{\begin{example}}
\newcommand{\eex}{\end{example}}


\newcommand{\balg}[1]{\begin{algorithm} \caption{#1}}
\newcommand{\ealg}{\end{algorithm}}

\newcommand{\balgc}{\begin{algorithmic}[1]}
\newcommand{\ealgc}{\end{algorithmic}}

\newcommand{\bary}{\begin{array}}
\newcommand{\eary}{\end{array}}
\newcommand{\bmx}{\begin{bmatrix}}
\newcommand{\emx}{\end{bmatrix}}
\newcommand{\bsmx}{\left[\begin{smallmatrix}}
\newcommand{\esmx}{\end{smallmatrix}\right]}
\newcommand{\bmxc}[1]{\left[\begin{array}{@{}#1@{}}}
\newcommand{\emxc}{\end{array}\right]}
\newcommand{\bcn}{\begin{center}}
\newcommand{\ecn}{\end{center}}





\newcommand{\Rbb}{{\mathbb{R}}}
\newcommand{\Zbb}{{\mathbb{Z}}}
\newcommand{\Cbb}{{\mathbb{C}}}

\newcommand{\bigN}{{\mathcal{N}}}
\newcommand{\bigR}{{\mathcal{R}}}




\newcommand{\bsG}{\boldsymbol{G}}

\newcommand{\I}{\boldsymbol{I}}

\renewcommand{\P}{\boldsymbol{P}}

\renewcommand{\S}{\boldsymbol{S}}
\newcommand{\T}{\boldsymbol{T}}

\newcommand{\W}{\boldsymbol{W}}

\renewcommand{\a}{\boldsymbol{a}}

\newcommand{\e}{\boldsymbol{e}}

\newcommand{\h}{\boldsymbol{h}}

\newcommand{\p}{\boldsymbol{p}}

\newcommand{\rr}{\boldsymbol{r}}
\newcommand{\bsr}{\boldsymbol{r}} 

\newcommand{\bst}{\boldsymbol{t}} 
\renewcommand{\u}{\boldsymbol{u}}

\newcommand{\w}{\boldsymbol{w}}
\newcommand{\x}{{\boldsymbol{x}}}
\newcommand{\y}{{\boldsymbol{y}}}
\newcommand{\z}{\boldsymbol{z}}
\newcommand{\0}{{\boldsymbol{0}}}









\newcommand{\norm}[1]{\left\lVert #1 \right\rVert} 
\newcommand{\floor}[1]{\left\lfloor #1 \right\rfloor} 
\newcommand{\ceil}[1]{\left\lceil #1 \right\rceil} 
\newcommand{\round}[1]{\left\lfloor #1 \right\rceil} 
\newcommand{\fabs}[1]{{\rm abs}\left( #1 \right)} 
\newcommand{\fsign}[1]{{\rm sign}\left( #1 \right)} 

\newcommand{\aStar}{\a^\star} 
\newcommand{\aDiamond}{\a^\diamond}
\newcommand{\aBar}{\bar{\a}}
\newcommand{\aBarStar}{\aBar^\star}
\newcommand{\aBarDagger}{\aBar^\dagger}
\newcommand{\aBarDiamond}{\aBar^\diamond}
\newcommand{\hBar}{\bar{\h}}

\hyphenation{op-tical net-works semi-conduc-tor}

\begin{document}

\title{A Quadratic Programming Relaxation Approach to Compute-and-Forward\\Network Coding Design}

\author{Baojian~Zhou,~\IEEEmembership{Student~Member,~IEEE,}
Jinming~Wen,
and~Wai~Ho~Mow,~\IEEEmembership{Senior~Member,~IEEE}
\thanks{Baojian~Zhou and Wai~Ho~Mow are with the Department
of Electronic and Computer Engineering, The Hong Kong University of Science and Technology, Clear Water Bay,
NT, Hong Kong (e-mail: \{bzhouab, eewhmow\}@ust.hk).
Baojian~Zhou was supported by a grant from University Grants Committee
of the Hong Kong Special Administrative Region, China (Project No. AoE/E-02/08).}%
\thanks{Jinming~Wen is with the Laboratoire de l'Informatique du Parall\'elisme, (CNRS,
ENS de Lyon, Inria, UCBL), Universit\'e de Lyon, Lyon 69007, France (e-mail: jwen@math.mcgill.ca).
Jinming~Wen was supported in part by ANR through the HPAC project
under Grant ANR 11 BS02 013.}}
%
%
\maketitle

\begin{abstract}
Using physical layer network coding, compute-and-forward is a promising relaying scheme
that effectively exploits the interference between users and thus achieves high rates.
In this paper, we consider the problem of finding the optimal integer-valued coefficient vector for a relay
in the compute-and-forward scheme to maximize the computation rate at that relay.
Although this problem turns out to be a shortest vector problem, which is suspected to be NP-hard,
we show that it can be relaxed to a series of equality-constrained quadratic programmings.
The solutions of the relaxed problems serve as real-valued approximations of the optimal coefficient vector,
and are quantized to a set of integer-valued vectors, from which a coefficient vector is selected.
The key to the efficiency of our method is that the closed-form expressions of the real-valued approximations
can be derived with the Lagrange multiplier method.
Numerical results demonstrate that compared with the existing methods,
our method offers comparable rates at an impressively low complexity.
\end{abstract}

\begin{IEEEkeywords}
physical layer network coding, AWGN networks, compute-and-forward, quadratic programming, Lagrange multiplier.
\end{IEEEkeywords}

\IEEEpeerreviewmaketitle

\section{Introduction}
\IEEEPARstart{B}{eing} a promising relaying strategy in wireless networks, compute-and-forward (CF)
has attracted a lot of research interest since it was proposed in 2008 by Nazer and Gastpar \cite{Nazer2008}.
The advantage of CF is that it achieves higher rates in the medium signal-to-noise ratio (SNR) regime
when compared with other relaying strategies, e.g., amplify-and-forward, decode-and-forward.
Relays in CF attempt to decode
integer linear combinations of the transmitted codewords, rather than the codewords themselves.
The integer coefficient vectors corresponding to the linear combinations and the decoded messages are
then forwarded to the destination.
Under certain conditions (see \cite{Nazer2011} for details), the destination can recover the original source messages
with enough forwarded messages and coefficient vectors from the relay.

The design of the CF scheme lies in selecting the coefficient vectors at the relays.
There are many choices of coefficient vectors for one relay, and each may render
a different \emph{computation rate} \cite{Nazer2011} at that relay.
Computation rate is defined as the maximum transmission rate from the associated sources to a relay
such that the linear combinations at the relay can be successfully decoded.
If the coefficient vectors are linearly independent,
the \emph{achievable rate} of the network equals the minimum computation rate;
otherwise, the achievable rate of the network is zero and none of the messages can be recovered.
The objective of designing the CF scheme is
to maximize the overall achievable rate of the network.
One approximation method is presented in \cite{Wei2012}.
However, for those networks
where each relay is allowed to send only one coefficient vector to the destination,
and only local channel state information (CSI) is available,
i.e., each relay knows merely its own channel vector,
one reasonable solution is to select the coefficient vector
that maximizes the computation rate at each relay.

In this paper, we consider additive white Gaussian noise (AWGN) networks where only local CSI is available,
and focus on the CF network coding design problem
with the objective being maximizing the computation rate at a relay by choosing the optimal coefficient vector.
It has been shown that this problem reduces to a shortest vector problem (SVP).
Different methods have been developed to tackle this problem.
The branch-and-bound method proposed in \cite{Richter2012} finds the optimal solution
but its efficiency degrades as the dimension of channel vectors grows according to the simulation results.
Although the general SVP is suspected to be NP-hard,
Sahraei and Gastpar showed in \cite{Sahraei2014} that the SVP in the CF design is special,
and developed an algorithm (called the ``SG'' method in this paper) that solves the SVP in polynomial time.
For independent and identically distributed (i.i.d.) Gaussian channel entries,
the complexity of the SG method is of order 2.5 with respect to the dimension,
and is linear with respect to the square root of the signal-to-noise ratio (SNR).
A class of methods are those based on lattice reduction (LR) algorithms
(e.g., Minkowski, HKZ, LLL, and CLLL LR algorithms;
c.f. \cite{Zhang2012, Lenstra1982, Vetter2009, Ling2013, Chang2013, Gan2009}).
The method in \cite{Sakzad2012} based on the LLL LR algorithm \cite{Lenstra1982} provides close-to-optimal rates
and is well-known to be of polynomial time complexity with respect to the vector dimension.
In \cite{Wen2015}, we proposed an efficient method based on sphere decoding to find the optimal coefficient vector;
however there is no theoretical guarantee on the complexity.

Our goal in this work is to develop a new method that finds a suboptimal coefficient vector for a relay
with low complexity compared with the existing methods,
while provides a close-to-optimal computation rate at the same time.
Taking advantage of some useful properties of the problem, we first show that the original SVP can be
approximated by a series of quadratic programmings (QPs).
The closed-form solutions of the QPs are derived by use of the Lagrange multiplier method and
can be computed with linear complexity with respect to the dimension,
which is the key to the efficiency of our method.
The solutions of the QPs serve as real-valued approximations of the integer coefficient vector,
and are quantized into a set of candidate integer vectors by a successive quantization algorithm.
Finally, the integer vector in the candidate set that maximizes the computation rate
is selected to be the coefficient vector.
The complexity of our method is of order 1.5 with respect to the dimension for i.i.d. Gaussian channel entries,
and is lower than the above mentioned methods.
Numerical results demonstrate that among existing methods that provide close-to-optimal rates,
our method is much more efficient as expected.

As a summary, our contributions in this work include the following:
\begin{itemize}
\item For the real-valued channels,
we develop a quadratic programming relaxation approach to find a suboptimal coefficient vector
for a relay so that the computation rate at that relay is close-to-optimal.
The complexity is $O(L\sqrt{P\norm{\h}^2})$ for a given channel vector $\h\in\Rbb^L$
and signal power constraint $P$,
and is of average value $O(P^{0.5}L^{1.5})$ for i.i.d. standard Gaussian channel entries.
\item For the complex-valued channels, we demonstrate how to apply our method in an efficient way
to find the complex-valued coefficient vector.
\item Extensive simulation results are presented to compare the effectiveness and efficiency of our method
with the existing methods.
\end{itemize}
Part of this work has been presented in \cite{Zhou2014}.
One main improvement here is the complexity order for i.i.d. Gaussian channel entries is further reduced from 3 to 1.5.


In the following, we will first introduce the system model of AWGN networks as well as the CF network coding
design problem in Section~\ref{section:ProblemStatement}.
Then in Section~\ref{section:ProposedMethod}, we will present our proposed method in detail.
Numerical results will be shown in Section~\ref{section:NumericalResults}.
Finally, we will conclude our work in Section~\ref{section:Conclusions}.

{\it Notation.}
Let $\Rbb$ be the real field, $\Cbb$ be the complex field, and $\Zbb$ be the ring of integers.
Boldface lowercase letters denote column vectors, and boldface uppercase letters denote matrices,
e.g., $\w\in\Rbb^L$ and $\W\in\Rbb^{M\times L}$.
$\norm{\w}$ denotes the $\ell^2$-norm of $\w$, and $\w^T$ denotes the transpose of $\w$.
For a vector $\w$, let $\w(\ell)$ be the element with index $\ell$,
and $\w(i\!:\!j)$ be the vector composed of elements with indices from $i$ to $j$.
For a matrix $\W$, let \mbox{$\W(i\!:\!j,k\!:\!\ell)$} be the submatrix containing elements
with row indices from $i$ to $j$ and column indices from $k$ to $\ell$,
\mbox{$\W(i\!:\!j,k)$} be the submatrix containing elements with row indices from $i$ to $j$ and column index $k$,
$\W(i,k\!:\!\ell)$ be the submatrix containing elements with row index $i$ and column indices from $k$ to $\ell$,
and $\W(i,j)$ be the element with row index $i$ and column index $j$.
Let $\floor{x}$ and $\ceil{x}$, i.e., the corresponding floor and ceiling functions of $x$,
be the maximum integer no greater than $x$ and the minimum integer no less than $x$, respectively.
Let $\floor{\w}_\ell$ and $\ceil{\w}_\ell$ be the vectors generated from $\w$
by applying the corresponding operation on the $\ell$\!-th element only.
$\0$ denotes an all-zero vector, and $\I$ denotes an identity matrix.
${\rm sign}(\w)$ returns the vector that contains the signs of the elements in $\w$.
${\rm abs}(\w)$ returns the vector whose elements are the absolute values of the elements in $\w$.

\section{Problem Statement}
\label{section:ProblemStatement}
We consider additive white Gaussian noise (AWGN) networks \cite{Nazer2011} where sources, relays and destinations
are connected by linear channels with AWGN.
For the ease of explanation, we first develop our method for real-valued channels,
and then demonstrate how to apply our method to complex-valued channels.
An AWGN network with real-valued channels is defined as the following.

\begin{definition}
\label{definition:RealChannelModel}
\emph{(Real-Valued Channel Model)}
In an AWGN network, each relay (indexed by $m=1,2,\cdots,M$) observes a noisy linear combination of
the transmitted signals through the channel,
\begin{align}
\label{equation:RealChannelModel}
\y_m = \sum_{\ell=1}^L \h_m(\ell)\x_\ell + \z_m,
\end{align}
where $\x_\ell \in \Rbb^n$ with the power constraint $\frac{1}{n}\norm{\x_\ell}^2 \leq P$
is the transmitted codeword from source $\ell$ ($\ell = 1,2,\cdots,L$),
$\h_m \in \Rbb^L$ is the channel vector to relay $m$,
$\h_m(\ell)$ is the $\ell$-th entry of $\h_m$,
$\z_m \in \Rbb^n$ is the noise vector with entries being i.i.d. Gaussian,
i.e., $\z_m\!\sim\!\bigN\!\left(\0,\I\right)$, and $\y_m$ is the signal received at relay $m$.
\end{definition}

In the sequel, we will focus on one relay and thus ignore the subscript ``$m$'' in $\h_m$, $\a_m$, etc.

In CF, rather than directly decode the received signal $\y$ as a codeword, a relay first applies to $\y$ an amplifying
factor $\alpha$ such that $\alpha\h$ is close to an integer \emph{coefficient vector} $\a$,
and tries to decode $\alpha\y$ as an integer linear combination,
whose coefficients form $\a$, of the original codewords $\{\x_\ell\}$.
The \emph{computation rate}~\cite{Nazer2011} is the maximum transmission rate from the associated sources to a relay
such that the integer linear combinations at the relay can be decoded with arbitrarily small error probability.
Assume the $\log$ function is with respect to base 2,
and define $\log^+(w) \triangleq \max \left( \log(w),0 \right)$.
The computation rate can be calculated with Theorem~\ref{theorem:ComputationRate} from~\cite{Nazer2011}.

\begin{theorem}
\emph{(Computation Rate in Real-Valued Channel Model)}
\label{theorem:ComputationRate}
For a relay with coefficient vector $\a$ in the real-valued channel model defined in
Definition~\ref{definition:RealChannelModel},
the following computation rate is achievable,

\begin{equation}
\begin{aligned}
\label{equation:ComputationRate}
\bigR \left( \h, \a \right) =
\frac{1}{2} \log^+
\left( \left(
\norm{\a}^2 -
\frac{P (\h^T\a)^2}{1 + P\norm{\h}^2}
\right)^{-1}\right).
\end{aligned}
\end{equation}
\end{theorem}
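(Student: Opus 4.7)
The plan is to reconstruct the nested-lattice achievability argument of Nazer and Gastpar \cite{Nazer2011}, tracking only those steps that determine the algebraic form in \eqref{equation:ComputationRate}. Since the theorem is stated as an achievability claim for a \emph{particular} integer vector $\a$, it suffices to exhibit a coding scheme whose effective signal-to-noise ratio matches the bracketed expression.

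First I would fix a pair of nested lattices $\Lambda \subseteq \Lambda_f \subset \Rbb^n$ in which the coarse lattice $\Lambda$ has per-dimension second moment $P$ and is good for covering/MSE quantization, while the fine lattice $\Lambda_f$ is good for AWGN. Each source $\ell$ maps its message to a representative $\bst_\ell$ of a coset of $\Lambda$ in $\Lambda_f$ and transmits $\x_\ell = [\bst_\ell - \d_\ell] \bmod \Lambda$, where $\d_\ell$ is an independent random dither uniform on the Voronoi region of $\Lambda$. By the crypto lemma, $\x_\ell$ is then uniform on that region, so that $\tfrac{1}{n}\norm{\x_\ell}^2 \leq P$ and $\x_\ell$ is independent of the information.

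Next I would have the relay form $\alpha\y + \sum_{\ell=1}^L a(\ell)\,\d_\ell$ and reduce modulo $\Lambda$. A short manipulation shows that the result equals $\bigl[\sum_\ell a(\ell)\bst_\ell + \bxi\bigr] \bmod \Lambda$, where
$$\bxi \;=\; \sum_{\ell=1}^L \bigl(\alpha\,\h(\ell) - \a(\ell)\bigr)\x_\ell \;+\; \alpha\,\z \;\in\;\Rbb^n.$$
Conditional on the dithers, $\bxi$ is independent of the information-bearing lattice point, and each of its coordinates has variance
$$N(\alpha) \;=\; P\,\norm{\alpha\h - \a}^2 \;+\; \alpha^2.$$
Standard results for modulo-$\Lambda$ lattice decoding with good nested ensembles then imply that the integer combination $\sum_\ell a(\ell)\bst_\ell \bmod \Lambda$ can be recovered with vanishing error probability at any rate strictly below $\tfrac{1}{2}\log\bigl(P/N(\alpha)\bigr)$.

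The final step is a one-dimensional quadratic optimization. Writing $N(\alpha) = (1+P\norm{\h}^2)\alpha^2 - 2P(\h^T\a)\alpha + P\norm{\a}^2$, the unconstrained minimizer is the MMSE scale $\alpha^\star = P(\h^T\a)/(1+P\norm{\h}^2)$; substituting back gives $N(\alpha^\star) = P\norm{\a}^2 - P^2(\h^T\a)^2/(1+P\norm{\h}^2)$, so that $P/N(\alpha^\star)$ coincides with the quantity inside $\log^+$ in \eqref{equation:ComputationRate}. The $\log^+$ truncation merely absorbs degenerate $\a$ for which no positive rate is achievable. The main obstacle is the middle step: rigorously producing nested lattice sequences that are simultaneously good for covering and for AWGN, together with the corresponding bound on the modulo-lattice decoding error probability, is the real technical core, and I would dispatch it by invoking the Erez--Zamir construction used in \cite{Nazer2011}. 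The optimization outlined above is then essentially mechanical.
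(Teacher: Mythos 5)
Your proposal is correct: the paper itself offers no proof of this theorem, importing it directly from Nazer and Gastpar \cite{Nazer2011}, and your reconstruction is precisely the nested-lattice achievability argument of that reference --- dithered transmission over a good nested-lattice pair, the effective-noise decomposition with per-dimension variance $N(\alpha) = P\norm{\alpha\h-\a}^2 + \alpha^2$, and the MMSE choice $\alpha^\star = P(\h^T\a)/(1+P\norm{\h}^2)$, whose substitution yields exactly the bracketed expression in \eqref{equation:ComputationRate}. The algebra in your final step checks out, and deferring the lattice-goodness machinery to the Erez--Zamir construction is exactly what the cited source does.
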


With the computation rate being the metric, we define the optimal coefficient vector as follows.

\begin{definition}
\emph{(The Optimal Coefficient Vector)}
The optimal coefficient vector $\aStar$ for a channel vector $\h$ is the one that maximizes the computation rate,
\begin{equation}
\begin{aligned}
\label{equation:aOptimal}
\aStar = \arg \max_{\a \in \Zbb^L \backslash\{\0\}} \bigR \left( \h, \a \right).
\end{aligned}
\end{equation}
\end{definition}

After a few simple manipulations,
the optimization problem stated in \eqref{equation:aOptimal} can be written in the following quadratic form~\cite{Wei2012},
\begin{equation}
\label{equation:IntegerQP}
\begin{aligned}
\aStar =
\arg\min_{\a\in\Zbb^L\backslash\{\0\}}
\a^T\bsG\a,
\end{aligned}
\end{equation}
where
\begin{equation}
\begin{aligned}
\label{equation:G}
\bsG \triangleq
\I - \frac{P}{1 + P\norm{\h}^2} \h\h^T.
\end{aligned}
\end{equation}

If we take $\bsG$, which is positive definite, as the \emph{Gram matrix} of a lattice $\Lambda$,
then the problem turns out to be the SVP in the lattice $\Lambda$.
In the next section, we will propose an efficient approximation method based on QP relaxation
that gives a suboptimal coefficient vector.

\section{Proposed Method}
\label{section:ProposedMethod}
In this section, we will first derive our method for the real-valued channel model,
and then extend the method for the complex-valued channel model.

\subsection{Preliminaries}
We start with investigating some properties of the problem, which is the basis of our new method.

\begin{definition}
(Signature Matrix)
A signature matrix is a diagonal matrix whose diagonal elements are $\pm1$.
\end{definition}

\begin{definition}
(Signed Permutation Matrix)
A signed permutation matrix is a generalized permutation matrix whose nonzero entries are $\pm 1$.
\end{definition}

\begin{remark}
After replacing $-1$'s with $1$'s, a signed permutation matrix becomes a permutation matrix.
Obviously, signed permutation matrices are unimodular and orthogonal.
Every signed permutation matrix can be expressed as $\S=\P\T$,
where $\T$ is a signature matrix, and $\P$ is a permutation matrix.
\end{remark}

\begin{theorem}
\label{theorem:ProblemTransformation}
If $\aStar$ is the optimal coefficient vector for a channel vector $\h$ with power constraint $P$,
then for any signed permutation matrix $\S\in\Zbb^{L\times L}$, $\S\aStar$ is optimal for $\S\h$
with the same power constraint $P$, and $\bigR(\h,\aStar)=\bigR(\S\h,\S\aStar)$.
\end{theorem}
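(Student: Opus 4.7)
The plan is to exploit two structural facts about a signed permutation matrix $\S$: it is orthogonal ($\S^T\S = \I$), and it is unimodular with integer inverse (in fact $\S^{-1}=\S^T$ is itself a signed permutation matrix). The orthogonality will give that the computation rate is preserved under the joint transformation $(\h,\a)\mapsto(\S\h,\S\a)$, and the unimodularity will give that $\a\mapsto \S\a$ is a bijection of $\Zbb^L\setminus\{\0\}$ onto itself. Together these imply the claim immediately.

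First I would plug $\S\h$ and $\S\a$ into the expression \eqref{equation:ComputationRate} for $\bigR$. The three quantities that appear are $\|\S\a\|^2$, $(\S\h)^T(\S\a)$, and $\|\S\h\|^2$. Using $\S^T\S=\I$, each equals its counterpart without $\S$:
\begin{equation*}
\|\S\a\|^2 = \a^T\S^T\S\a = \|\a\|^2,\quad (\S\h)^T(\S\a)=\h^T\a,\quad \|\S\h\|^2=\|\h\|^2.
\end{equation*}
Substituting into \eqref{equation:ComputationRate} gives $\bigR(\S\h,\S\a)=\bigR(\h,\a)$ for every $\a\in\Zbb^L\setminus\{\0\}$. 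In particular, $\bigR(\S\h,\S\aStar)=\bigR(\h,\aStar)$, which takes care of the second assertion once optimality is established.

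Next I would argue optimality. Because $\S$ has integer entries and its inverse $\S^{-1}=\S^T$ also has integer entries, left-multiplication by $\S$ is a bijection on $\Zbb^L$ that fixes $\0$. Hence
\begin{equation*}
\max_{\b\in\Zbb^L\setminus\{\0\}}\bigR(\S\h,\b)
= \max_{\a\in\Zbb^L\setminus\{\0\}}\bigR(\S\h,\S\a)
= \max_{\a\in\Zbb^L\setminus\{\0\}}\bigR(\h,\a)
= \bigR(\h,\aStar),
\end{equation*}
where the last equality uses that $\aStar$ solves \eqref{equation:aOptimal}. Setting $\a=\aStar$ on the second line shows that the value $\bigR(\h,\aStar)$ is attained by the feasible point $\b=\S\aStar$, so $\S\aStar$ is itself optimal for $\S\h$. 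Note that the power constraint $P$ enters the rate only through the scalar $P/(1+P\|\h\|^2)$, which is unchanged by $\S$, so there is no need to modify $P$.

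There is no serious obstacle here; the only subtlety worth spelling out is the unimodularity step, which I would justify by writing $\S=\P\T$ per the preceding Remark and observing that both $\P$ and $\T$ are integer matrices with integer inverses, so the bijection on $\Zbb^L$ is genuine rather than merely a map into a sublattice. Everything else is a one-line computation using $\S^T\S=\I$.
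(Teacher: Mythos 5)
Your proof is correct and follows essentially the same route as the paper's: establish $\bigR(\h,\a)=\bigR(\S\h,\S\a)$ from $\S^T\S=\I$, then transfer optimality. The only difference is that you make explicit the bijection argument (that $\a\mapsto\S\a$ maps $\Zbb^L\setminus\{\0\}$ onto itself because $\S^{-1}=\S^T$ is also an integer matrix), a step the paper's proof leaves implicit when it asserts that $\S\aStar$ maximizes $\bigR(\S\h,\S\a)$.
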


\begin{proof}
We first show $\bigR \left( \h, \a \right) = \bigR \left( \S\h, \S\a \right)$ for any $\h$ and $\a$
with the same power constraint $P$.
$\S$ is unimodular, then $\S\a$ is an integer vector and can be applied as a coefficient vector.
$\S$ is orthogonal, then $\S^T\S = \I$. $\norm{\S\h}^2 = \h^T\S^T\S\h = \h^T\h = \norm{\h}^2$,
and similarly $\norm{\S\a}^2 = \norm{\a}^2$. $(\S\h)^T\S\a = \h^T\S^T\S\a = \h^T\a$.
According to Theorem~\ref{theorem:ComputationRate}, the computation rate $\bigR \left( \h, \a \right)$
is determined by $P$, $\norm{\h}^2$, $\norm{\a}^2$, and $\h^T\a$.
Thus, $\bigR \left( \h, \a \right) = \bigR \left( \S\h, \S\a \right)$.

$\aStar$ is optimal for $\h$ means $\aStar$ maximizes $\bigR \left( \h, \a \right)$.
Then $\S\aStar$ maximizes $\bigR \left( \S\h, \S\a \right)$
since $\bigR \left( \h, \a \right) = \bigR \left( \S\h, \S\a \right)$ always holds.
Therefore, $\S\aStar$ is optimal for $\S\h$ with the same power constraint $P$.
\end{proof}

\begin{definition}
(Nonnegative Ordered Vector)
A vector $\h$ is said to be nonnegative ordered if its elements are nonnegative and
in nondecreasing order according to their indices.
\end{definition}

\begin{lemma}
\label{lemma:h2hBar}
For any vector $\h$, there exists a signed permutation matrix $\S$ such that $\S\h$ is nonnegative ordered.
\end{lemma}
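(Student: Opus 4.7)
The plan is to construct $\S$ explicitly as a product $\S=\P\T$, where $\T$ is a signature matrix that fixes the signs of the entries of $\h$ and $\P$ is a permutation matrix that sorts the resulting nonnegative vector into nondecreasing order. By the remark preceding the lemma, any such product is a signed permutation matrix, so this is a legitimate choice of $\S$.

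First I would define $\T=\diag(\t)$, where $\t(\ell)=1$ if $\h(\ell)\ge 0$ and $\t(\ell)=-1$ if $\h(\ell)<0$ (any convention for zeros will do). Then $\T\h=\fabs{\h}$, the vector of elementwise absolute values, which is nonnegative. Next I would let $\P$ be the permutation matrix associated with any permutation $\pi$ of $\{1,\ldots,L\}$ that sorts $\fabs{\h}$ in nondecreasing order, i.e.\ $\fabs{\h}(\pi(1))\le\fabs{\h}(\pi(2))\le\cdots\le\fabs{\h}(\pi(L))$. Such a $\pi$ always exists (sorting a finite list of reals), so $\P$ is well defined. Then $\S\h=\P\T\h=\P\fabs{\h}$ has entries $\fabs{\h}(\pi(1)),\ldots,\fabs{\h}(\pi(L))$, which are nonnegative and in nondecreasing order by construction, so $\S\h$ is nonnegative ordered in the sense of the preceding definition.

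There is essentially no obstacle here: the lemma is structural, and the only thing to check is that the construction produces an object of the right type. The mild point worth stating explicitly is that $\T$ is a signature matrix by definition and $\P$ is a permutation matrix by definition, so by the remark $\S=\P\T$ is a signed permutation matrix with integer entries in $\{-1,0,1\}$, hence $\S\in\Zbb^{L\times L}$ as required by the statement of Theorem~\ref{theorem:ProblemTransformation} where the lemma will be applied. Writing this up should take only a few lines.
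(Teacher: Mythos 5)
Your construction is correct and is exactly the one the paper uses: the paper gives no formal proof but states in the remark following the lemma that one may take $\S=\P\T$ with $\T$ a signature matrix making the entries nonnegative and $\P$ a permutation matrix sorting them into nondecreasing order. Your write-up just spells out the same argument in slightly more detail.
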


\begin{remark}
To find such an $\S$ in Lemma~\ref{lemma:h2hBar}, we can simply choose $\S=\P\T$,
where $\T$ is a signature matrix that converts all the elements in $\h$ to nonnegative,
and $\P$ is a permutation matrix that sorts the elements in $\T\h$ in nondecreasing order.
\end{remark}

With Theorem~\ref{theorem:ProblemTransformation} and Lemma~\ref{lemma:h2hBar},
for any channel vector $\h$, we can first find a signed permuation matrix $\S$
and transform $\h$ to the nonnegative ordered $\hBar=\S\h$,
then obtain the optimal coefficient vector $\aBarStar$ for $\hBar$,
and finally recover the desired optimal coefficient vector $\aStar=\S^{-1}\aBarStar$ for $\h$.
In this way, it suffices to focus on solving the problem in \eqref{equation:IntegerQP}
for nonnegative ordered channel vectors $\hBar$.

\begin{remark}
\label{remark:Transformation}
In implementation, there is no need to use the signed permutation matrix $\S$.
It is merely necessary to:
1) record the sign of the elements in $\h$ with a vector $\bst=\fsign{\h}$,
and 2) sort $\boldsymbol{\hbar}=\fabs{\h}$ in ascending order as $\hBar$
and record the original indices of the elements with a vector $\p$
such that $\hBar(\ell)=\boldsymbol{\hbar}(\p(\ell))$, $\ell=1,2,\cdots,L$.
After $\aBarStar$ for $\hBar$ is obtained, $\aStar$ for $\h$ can be recovered with
$\aStar(\p(\ell))=\bst(\p(\ell))\aBarStar(\ell)$, $\ell=1,2,\cdots,L$.
\end{remark}

\begin{example}
Given a channel vector as $\h=[-1.9,0.1,1.1]^T$, then $\bst=[-1,1,1]^T$, $\fabs{\h}=[1.9,0.1,1.1]^T$,
$\hBar=[0.1,1.1,1.9]^T$, and $\p=[2,3,1]^T$.
If for certain power $P$, $\aBarStar=[0,1,2]^T$, then $\aStar=[-2,0,1]^T$.
\end{example}

According to \eqref{equation:IntegerQP}, if $\aStar$ is optimal for $\h$, then $-\aStar$ is also optimal for $\h$.
To reduce redundancy, we restrict the optimal coefficient vector $\aStar$ to be the one
such that $\h^T\aStar\geq0$ in the following.

\begin{lemma}
\label{lemma:aNonnegative}
If all the elements in a channel vector $\h$ are nonnegative,
then all the elements in the optimal coefficient vector $\aStar$ are also nonnegative.
\end{lemma}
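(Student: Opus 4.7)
The plan is a clean sign-flipping swap argument exploiting the structure of the objective. By \eqref{equation:IntegerQP}--\eqref{equation:G}, maximizing $\bigR(\h,\a)$ is equivalent to minimizing
\[ f(\a) \triangleq \norm{\a}^2 - \frac{P(\h^T\a)^2}{1+P\norm{\h}^2} \]
over $\a\in\Zbb^L\backslash\{\0\}$, and this objective depends on $\a$ only through $\norm{\a}^2$ and $(\h^T\a)^2$. The key observation I would exploit is that, when $\h\geq \0$ and the restriction $\h^T\aStar\geq 0$ is in force, flipping the sign of a negative entry of $\aStar$ leaves $\norm{\a}^2$ unchanged while only enlarging $\h^T\a$, so it weakly improves the objective.

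Concretely, I would argue by contradiction. Assume some entry $\aStar(i)<0$, and define $\a'$ from $\aStar$ by replacing $\aStar(i)$ with $-\aStar(i)$ and leaving every other entry unchanged. Then $\norm{\a'}^2=\norm{\aStar}^2$, and a direct one-line computation gives
\[ \h^T\a' = \h^T\aStar - 2\h(i)\aStar(i) = \h^T\aStar + 2\h(i)\abs{\aStar(i)}. \]
Since $\h(i)\geq 0$ and $\h^T\aStar\geq 0$, this yields $\h^T\a'\geq \h^T\aStar\geq 0$, hence $(\h^T\a')^2\geq (\h^T\aStar)^2$, and therefore $f(\a')\leq f(\aStar)$, with strict inequality whenever $\h(i)>0$. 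In that case, strict improvement contradicts the optimality of $\aStar$.

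The one subtle step I anticipate is tie-breaking when $\h(i)=0$: the swap then produces $\a'$ with exactly the same objective value, but $\a'$ still satisfies $\h^T\a'\geq 0$, so it is an equally valid optimum and we may simply replace $\aStar$ by $\a'$. Iterating this swap over each negative coordinate terminates in at most $L$ steps and yields an optimal coefficient vector with all entries nonnegative. The only care required is to verify that the sign restriction $\h^T\a\geq 0$ is preserved throughout the iteration, which is immediate from the inequality above, so no genuine obstacle is expected.
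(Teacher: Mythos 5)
Your argument is sound and follows the same template as the paper's --- a local perturbation of $\aStar$ followed by a contradiction with optimality --- but the perturbation you choose is different, and the difference matters for the strength of the conclusion. The paper replaces the offending entry $\aStar(i)<0$ by $0$ rather than by $-\aStar(i)$: this makes $\norm{\a'}^2$ \emph{strictly} smaller while still giving $\h^T\a'\geq\h^T\aStar\geq 0$, so the objective $\a^T\bsG\a$ strictly decreases regardless of whether $\h(i)>0$ or $\h(i)=0$, and no tie-breaking is ever needed. Your sign flip preserves $\norm{\a}^2$ exactly, so when $\h(i)=0$ you only get equality of objectives and must fall back on ``replace $\aStar$ by an equally good optimum.'' Consequently you prove only that \emph{some} optimal coefficient vector is nonnegative, whereas the paper's version shows that \emph{every} optimizer satisfying the sign convention $\h^T\aStar\geq 0$ is nonnegative (indeed, the zeroing argument simultaneously gives Lemma~\ref{lemma:a0}, that $\h(i)=0$ forces $\aStar(i)=0$, which your flip cannot see). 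Since Lemma~\ref{lemma:aNonnegative} is invoked later (in Theorem~\ref{theorem:aNonnegativeOrdered}) as a property of the optimal vector itself, the stronger, tie-free version is the one you want; the fix is simply to zero the entry instead of flipping it. Two minor points that apply to both proofs: one should note that the perturbed vector is nonzero (which holds here because $\h^T\a'\geq 0$ rules out $\aStar$ being supported on a single negative entry with $\h(i)>0$, and the $\h(i)=0$ case is handled by the strict norm decrease), and the ``contradiction with optimality'' is really with respect to the minimization \eqref{equation:IntegerQP} rather than the $\log^+$-clipped rate \eqref{equation:ComputationRate}.
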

\begin{proof}
Suppose $\aStar(i) < 0$, and define $\a'$ as: $\a'(i) = 0$, and $\a'(\ell) = \aStar(\ell)$, $\forall \ell \neq i$.
Obviously, $\norm{\a'} < \norm{\aStar}$, and $\h^T \a' \geq \h^T \aStar \geq 0$.
Then according to \eqref{equation:ComputationRate}, $\bigR \left( \h, \a' \right) > \bigR \left( \h, \aStar \right)$,
which implies $\aStar$ is not optimal and leads to a contradiction.
Thus, all the elements in $\aStar$ must be nonnegative.
\end{proof}

\begin{lemma}
\label{lemma:a0}
For a channel vector $\h$ and its corresponding optimal coefficient vector $\aStar$,
if $\h(i) = 0$, then $\aStar(i) = 0$.
\end{lemma}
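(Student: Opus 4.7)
My plan is to mimic the proof structure of Lemma~\ref{lemma:aNonnegative}: assume for contradiction that $\aStar(i) \neq 0$ while $\h(i) = 0$, construct a modified vector $\a'$ that zeroes out the $i$-th coordinate, and show $\a'$ strictly improves the computation rate.

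Concretely, first I would define $\a'$ by $\a'(i) = 0$ and $\a'(\ell) = \aStar(\ell)$ for all $\ell \neq i$. Because $\h(i) = 0$, the inner product is unaffected: $\h^T \a' = \sum_{\ell \neq i} \h(\ell)\aStar(\ell) = \h^T \aStar$. On the other hand, since $\aStar(i)$ is a nonzero integer, $\norm{\a'}^2 = \norm{\aStar}^2 - \aStar(i)^2 < \norm{\aStar}^2$. Plugging these two observations into the computation-rate expression \eqref{equation:ComputationRate}, the argument of $\log^+$ strictly increases, so $\bigR(\h,\a') > \bigR(\h,\aStar)$, contradicting the optimality of $\aStar$ provided $\a' \in \Zbb^L \setminus \{\0\}$.

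The only subtle point, which I expect to be the main (though small) obstacle, is verifying $\a' \neq \0$ so that $\a'$ is an admissible coefficient vector. This fails only when $\aStar$ is supported solely on the $i$-th coordinate, i.e., $\aStar = \aStar(i)\, e_i$. In that case $\h^T \aStar = 0$ (since $\h(i)=0$) and \eqref{equation:ComputationRate} yields $\bigR(\h,\aStar) = \tfrac{1}{2}\log^+\!\bigl(1/\aStar(i)^2\bigr) \leq 0$. Assuming the nontrivial case $\h \neq \0$, pick any index $j$ with $\h(j)\neq 0$ and take $\a'' = e_j$; then \eqref{equation:ComputationRate} gives a strictly positive rate, again contradicting optimality. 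Hence the degenerate sub-case is also ruled out, and we conclude $\aStar(i) = 0$.
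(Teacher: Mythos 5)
Your proof is correct and takes essentially the same approach as the paper: zero out the $i$-th coordinate of $\aStar$, observe that $\norm{\a'}$ strictly decreases while $\h^T\a'=\h^T\aStar$, and conclude from \eqref{equation:ComputationRate} that $\aStar$ was not optimal. Your additional verification that $\a'\neq\0$ (ruling out the degenerate case $\aStar=\aStar(i)\,\e_i$ via the strictly positive rate of $\e_j$ for some $j$ with $\h(j)\neq 0$) is a small point the paper's proof silently omits, and you handle it correctly.
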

\begin{proof}
Suppose $\h(i) = 0$, and $\aStar(i) \neq 0$.
Define $\a'$ as: $\a'(i) = 0$, and $\a'(\ell) = \aStar(\ell)$, $\forall \ell \neq i$.
Obviously, $\norm{\a'} < \norm{\aStar}$, and $\h^T \a' = \h^T \aStar \geq 0$.
Then according to \eqref{equation:ComputationRate}, $\bigR \left( \h, \a' \right) > \bigR \left( \h, \aStar \right)$,
which implies $\aStar$ is not optimal.
Thus, if $\h(i) = 0$, then $\aStar(i) = 0$.
\end{proof}

\begin{lemma}
\label{lemma:hijEqual}
For a channel vector $\h$ and its corresponding optimal coefficient vector $\aStar$,
if $\h(i) = \h(j)$, $i < j$, then $\aStar(i) = \aStar(j)$ or $\fabs{\aStar(i) - \aStar(j)} = 1$.
\end{lemma}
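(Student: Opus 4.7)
The plan is to argue by contradiction using a simple one-step ``swap'' perturbation that leaves $\h^{T}\a$ unchanged while strictly decreasing $\|\a\|^{2}$. Suppose $\aStar$ is optimal for $\h$ with $\h(i)=\h(j)$, $i<j$, but $\aStar(i)\neq\aStar(j)$ and $|\aStar(i)-\aStar(j)|\geq 2$. Without loss of generality assume $\aStar(i)<\aStar(j)$ (the other case is symmetric). Define a candidate $\a'\in\Zbb^{L}$ by
\begin{equation*}
\a'(i)=\aStar(i)+1,\quad \a'(j)=\aStar(j)-1,\quad \a'(\ell)=\aStar(\ell)\ \text{for}\ \ell\notin\{i,j\}.
\end{equation*}
The goal is to show that $\a'$ achieves a strictly larger computation rate, contradicting the optimality of $\aStar$.

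The key algebraic observations are routine. First, since $\h(i)=\h(j)$,
\begin{equation*}
\h^{T}\a' - \h^{T}\aStar = \h(i)-\h(j)=0,
\end{equation*}
so $\h^{T}\a'=\h^{T}\aStar$. Second,
\begin{equation*}
\|\a'\|^{2}-\|\aStar\|^{2}
=(\aStar(i)+1)^{2}+(\aStar(j)-1)^{2}-\aStar(i)^{2}-\aStar(j)^{2}
=2\bigl(\aStar(i)-\aStar(j)+1\bigr)\le -2,
\end{equation*}
where the inequality uses $\aStar(j)-\aStar(i)\geq 2$. In particular $\|\a'\|^{2}<\|\aStar\|^{2}$. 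Substituting into Theorem~\ref{theorem:ComputationRate}, the denominator inside $\log^{+}$ for $\a'$ equals $\|\a'\|^{2}-P(\h^{T}\a')^{2}/(1+P\|\h\|^{2})$, which is strictly smaller than the corresponding quantity for $\aStar$ because the subtracted term is identical while $\|\a'\|^{2}<\|\aStar\|^{2}$. Hence the argument of $\log^{+}$ is strictly larger for $\a'$ than for $\aStar$, so $\bigR(\h,\a')\geq\bigR(\h,\aStar)$ with strict inequality whenever the rate is positive.

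The main obstacle I anticipate is the boundary case in which $\a'=\boldsymbol{0}$, since the minimization in \eqref{equation:IntegerQP} is over $\Zbb^{L}\setminus\{\boldsymbol{0}\}$, and the case in which $\bigR(\h,\aStar)=0$, where the $\log^{+}$ clamp could blunt the strict-inequality argument. However, $\a'=\boldsymbol{0}$ forces $\aStar(i)=-1$, $\aStar(j)=1$, and all other entries zero; then $\h^{T}\aStar=0$ and $\|\aStar\|^{2}=2$, giving $\bigR(\h,\aStar)=0$. In that situation, picking any standard basis vector $\e_{k}$ with $\h(k)\ne 0$ already produces a strictly positive rate (as can be checked directly from \eqref{equation:ComputationRate}), contradicting the optimality of $\aStar$; the only remaining subcase is $\h\equiv \boldsymbol{0}$, which makes Lemma~\ref{lemma:a0} force $\aStar=\boldsymbol{0}$, contradicting $\aStar\in\Zbb^{L}\setminus\{\boldsymbol{0}\}$. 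This closes the loop and establishes either $\aStar(i)=\aStar(j)$ or $|\aStar(i)-\aStar(j)|=1$.
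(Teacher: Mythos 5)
Your proof is correct and follows essentially the same route as the paper's: the identical swap perturbation $\a'(i)=\aStar(i)+1$, $\a'(j)=\aStar(j)-1$, which preserves $\h^{T}\a$ (since $\h(i)=\h(j)$) while strictly reducing $\norm{\a}^{2}$, yielding a contradiction with optimality via \eqref{equation:ComputationRate}. You merely make explicit two things the paper leaves implicit — the arithmetic $\norm{\a'}^{2}-\norm{\aStar}^{2}=2(\aStar(i)-\aStar(j)+1)\le -2$ and the degenerate boundary cases ($\a'=\0$, zero rate, $\h=\0$) where the $\log^{+}$ clamp would otherwise blunt the strict inequality — which is a welcome tightening but not a different argument.
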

\begin{proof}
Without loss of generality, assume $\aStar(i) - \aStar(j) < -1$.
Define $\a'$ as: $\a'(i) = \aStar(i)+1$, $\a'(j) = \aStar(j)-1$, and $\a'(\ell) = \aStar(\ell)$, $\forall \ell \notin \{i,j\}$.
Obviously, $\norm{\a'} < \norm{\aStar}$, and $\h^T \a' = \h^T \aStar \geq 0$.
Then according to \eqref{equation:ComputationRate}, $\bigR \left( \h, \a' \right) > \bigR \left( \h, \aStar \right)$,
which implies $\aStar$ is not optimal.
Thus, $\aStar(i) - \aStar(j) \geq -1$. Similarly, $\aStar(j) - \aStar(i) \geq -1$.
Therefore, $\aStar(i) = \aStar(j)$ or $\fabs{\aStar(i) - \aStar(j)}=~1$.
\end{proof}

\begin{remark}
\label{remark:hijEqual}
In Lemma~\ref{lemma:hijEqual}, for the case where $\h(i) = \h(j)$ with $\fabs{\aStar(i) - \aStar(j)} = 1$, $i < j$,
we will always set $\aStar(j) = \aStar(i) + 1$ since setting $\aStar(i) = \aStar(j) + 1$ results the same computation rate.
Then, as long as $\h(i) = \h(j)$, $i < j$, it holds that $\aStar(i)\leq\aStar(j)$.
\end{remark}

\begin{theorem}
\label{theorem:aNonnegativeOrdered}
For a nonnegative ordered channel vector $\h$, the optimal coefficient vector $\aStar$ is also nonnegative ordered.
\end{theorem}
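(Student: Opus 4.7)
The plan is to combine the earlier lemmas in a short case analysis. Since a nonnegative ordered vector is characterized by two properties (nonnegativity of entries and nondecreasing order), and Lemma~\ref{lemma:aNonnegative} already handles nonnegativity of $\aStar$ for a nonnegative $\h$, what remains is to prove monotonicity: $\aStar(1)\leq \aStar(2)\leq\cdots\leq\aStar(L)$. I will argue by contradiction, assuming indices $i<j$ with $\aStar(i)>\aStar(j)$, and split into two sub-cases according to whether $\h(i)<\h(j)$ or $\h(i)=\h(j)$ (exhaustive, because $\h$ is nondecreasing).

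For the strict case $\h(i)<\h(j)$, the natural move is a transposition: define $\a'$ by swapping the $i$-th and $j$-th entries of $\aStar$ and keeping all others fixed. Then $\a'\in\Zbb^L\setminus\{\0\}$ and $\norm{\a'}^2=\norm{\aStar}^2$. A direct computation gives
\begin{equation*}
\h^T\a'-\h^T\aStar = \bigl(\h(j)-\h(i)\bigr)\bigl(\aStar(i)-\aStar(j)\bigr) > 0.
\end{equation*}
Using the sign convention $\h^T\aStar\geq 0$ adopted just before Lemma~\ref{lemma:aNonnegative}, it follows that $(\h^T\a')^2>(\h^T\aStar)^2$. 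Substituting into \eqref{equation:ComputationRate} yields $\bigR(\h,\a')>\bigR(\h,\aStar)$, contradicting optimality of $\aStar$.

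For the equal case $\h(i)=\h(j)$, the swap argument only gives $\h^T\a'=\h^T\aStar$ and therefore cannot strictly improve the rate, so I invoke Lemma~\ref{lemma:hijEqual} directly. That lemma forces $\aStar(i)=\aStar(j)$ or $|\aStar(i)-\aStar(j)|=1$. The equality alternative contradicts $\aStar(i)>\aStar(j)$. For the unit-gap alternative, Remark~\ref{remark:hijEqual} has already fixed the convention of selecting the optimizer with $\aStar(j)=\aStar(i)+1$ whenever $i<j$, which again contradicts $\aStar(i)>\aStar(j)$. Hence no such pair $(i,j)$ exists, and $\aStar$ is nondecreasing.

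The main obstacle I anticipate is ensuring the swap argument in the strict case actually produces a \emph{strict} improvement in $\bigR$: the formula \eqref{equation:ComputationRate} is sensitive only to $\norm{\aStar}^2$ and $(\h^T\aStar)^2$, so the monotonicity of the mapping $x\mapsto x^2$ on $[0,\infty)$ must be leveraged, which in turn requires the preliminary normalization $\h^T\aStar\geq 0$. Once that is in place, the rest is routine case bookkeeping and appeal to Lemmas~\ref{lemma:aNonnegative}--\ref{lemma:hijEqual}.
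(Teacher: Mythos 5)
Your proof is correct and follows essentially the same route as the paper's: a transposition of the $i$-th and $j$-th entries that strictly increases $\h^T\a$ (using the sign convention $\h^T\aStar\geq 0$ and equal norms) when $\h(i)<\h(j)$, with Lemma~\ref{lemma:hijEqual} and Remark~\ref{remark:hijEqual} disposing of the tie case $\h(i)=\h(j)$. The only cosmetic difference is that you split into the two cases up front, whereas the paper first invokes Lemmas~\ref{lemma:a0} and~\ref{lemma:hijEqual} to conclude $\h(j)>\h(i)>0$ and then performs the swap.
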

\begin{proof}
According to Lemma~\ref{lemma:aNonnegative}, all the elements in $\aStar$ are nonnegative.
Suppose $\aStar$ is not nonnegative ordered,
then there must exist $i, j $ ($1 \leq i < j \leq L $) such that $\aStar(i) > \aStar(j) \geq 0$.
According to Lemma~\ref{lemma:a0}, $\aStar(i) > 0$ implies $\h(i) > 0$.
According to Lemma~\ref{lemma:hijEqual} and Remark~\ref{remark:hijEqual},
$\aStar(i) > \aStar(j)$ implies $\h(i) \neq \h(j)$ and thus $\h(j) > \h(i) > 0$.
Then, $\h(i) \aStar(j) + \h(j) \aStar(i) > \h(i) \aStar(i) + \h(j) \aStar(j)$.

Define $\a'$ as: $\a'(i) = \aStar(j)$, $\a'(j) = \aStar(i)$, and $\a'(\ell) = \aStar(\ell)$, $\forall \ell \notin \{i,j\}$.
Obviously, $\norm{\a'} = \norm{\aStar}$, and
$\h^T \a'
= \sum_{\ell = 1}^L \h(\ell) \a'(\ell)
= \sum_{\ell \notin \{i,j\}} \h(\ell) \a'(\ell) + \h(i) \a'(i) + \h(j) \a'(j)
= \sum_{\ell \notin \{i,j\}} \h(\ell) \aStar(\ell) + \h(i) \aStar(j) + \h(j) \aStar(i)
> \sum_{\ell \notin \{i,j\}} \h(\ell) \aStar(\ell) + \h(i) \aStar(i) + \h(j) \aStar(j)
= \sum_{\ell = 1}^L \h(\ell) \aStar(\ell)
= \h^T \aStar \geq 0$.
Then according to \eqref{equation:ComputationRate}, $\mathcal{R} \left( \h, \a' \right) > \mathcal{R} \left( \h, \aStar \right)$,
which implies $\aStar$ is not optimal.
Therefore, $\aStar$ must be nonnegative ordered.
\end{proof}

\subsection{Relaxation to QPs}
As stated before, it suffices to obtain the optimal coefficient vector for a nonnegative ordered channel vector.
Thus, in the following, we will focus on solving the problem in \eqref{equation:IntegerQP}
for a nonnegative ordered channel vector $\hBar$.
We first relax this problem to a series of QPs.

Denote the optimal coefficient vector for $\hBar$ as $\aBarStar$.
According to Theorem~\ref{theorem:aNonnegativeOrdered}, the maximum element in $\aBarStar$ is $\aBarStar(L)$.
Suppose $\aBarStar(L)$ is known to be $\bar{a}^\star_L \in \Zbb \backslash \{0\}$,
then the problem in \eqref{equation:IntegerQP} can be relaxed as a QP,
\begin{equation}
\begin{aligned}
\label{equation:RelaxedQP}
&\underset{\a}{\text{minimize}}
&&\a^T \bsG \a\\
&\text{subject to}
&&\a \in \Rbb^L\\
&
&&\a(L) = \bar{a}_L^\star
\end{aligned}
\end{equation}
where $\bsG$ is as defined in \eqref{equation:G}.
The problem is convex since $\bsG$ is positive definite.
Denote the solution of this relaxed problem as $\aBarDagger \in \Rbb^L$.
The intuition behind this relaxation is that appropriate quantization of the real-valued optimal $\aBarDagger$
with the constraint $\a(L) = \bar{a}_L^\star$ will lead to the integer-valued optimal $\aBarStar$
or at least a close-to-optimal one with a high probability.

However, since $\aBarStar(L)$ is unknown, we alternatively approximate the problem
in \eqref{equation:RelaxedQP} by solving a series of QPs,
i.e., solving the following QP multiple times for $k = 1,2,\cdots,K$.
\begin{equation}
\begin{aligned}
\label{equation:RelaxedQPs}
&\underset{\a}{\text{minimize}}
&&\a^T \bsG \a\\
&\text{subject to}
&&\a \in \Rbb^L\\
&
&&\a(L) = k
\end{aligned}
\end{equation}
Denote the solution to the above QP with the constraint $\a(L) = k$ as $\aBarDagger_k$.
For simplicity, we use $\{s_k\}$ to denote the set with elements being $s_k, k=1,2,\cdots,K$ in the following.
As long as $K \geq \aBarStar(L)$, the solution $\aBarDagger$ to the QP in \eqref{equation:RelaxedQP}
will be included in the set of solutions $\{ \aBarDagger_k \}$ to the QPs in \eqref{equation:RelaxedQPs}.
Fortunately, to obtain the solution set $\{ \aBarDagger_k \}$, it is sufficient to solve merely one QP
in \eqref{equation:RelaxedQPs} with $k = 1$, according to the following theorem.
\begin{theorem}
\label{theorem:aLinear}
Denote the solution to the QP in \eqref{equation:RelaxedQPs} with the constraint $\a(L) = k$ as $\aBarDagger_k$, then
$\aBarDagger_k = k \aBarDagger_1$.
\end{theorem}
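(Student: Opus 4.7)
My plan is to prove the linearity $\aBarDagger_k = k\aBarDagger_1$ by a simple homogeneity/scaling argument that exploits the fact that the objective $\a^T\bsG\a$ is a quadratic form while the constraint $\a(L)=k$ is linear. The only substantive ingredient I need is that $\bsG$ is positive definite (already noted right after equation~\eqref{equation:G}), so the feasible set is a nonempty affine subspace and the objective is strictly convex, guaranteeing a unique minimizer $\aBarDagger_k$ for each $k$.

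The first step is to establish a bijection between the feasible sets of the $k$-problem and the $1$-problem. Concretely, if $\a \in \Rbb^L$ satisfies $\a(L)=k$, then $\a/k$ satisfies $(\a/k)(L)=1$, and conversely if $\a'(L)=1$ then $(k\a')(L)=k$. This map is well-defined because $k \geq 1$ (so $k \neq 0$), and it is a bijection between the two feasible sets.

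Next I compare objective values under this bijection. For any feasible $\a$ in the $k$-problem, $\a^T\bsG\a = k^2 (\a/k)^T \bsG (\a/k)$, so minimizing $\a^T\bsG\a$ over $\{\a:\a(L)=k\}$ is equivalent to minimizing $(\a/k)^T\bsG(\a/k)$ over $\{\a/k : \a(L)=k\} = \{\a' : \a'(L)=1\}$, up to the positive multiplicative constant $k^2$. Therefore $\a$ minimizes the $k$-problem if and only if $\a/k$ minimizes the $1$-problem. Applied to $\a = \aBarDagger_k$, this yields $\aBarDagger_k/k = \aBarDagger_1$, i.e., $\aBarDagger_k = k\aBarDagger_1$, using uniqueness of the minimizer in the $1$-problem.

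There is essentially no obstacle here; the only thing to be slightly careful about is invoking strict convexity (equivalently, the positive definiteness of $\bsG$) to guarantee that the minimizers are unique, so that the scaling argument actually identifies $\aBarDagger_k$ rather than merely placing it in a set of equally optimal points. As a sanity check one can verify the same formula via Lagrange multipliers: the KKT condition $2\bsG\a + \lambda\, \e_L = \mathbf{0}$ combined with $\a(L)=k$ gives $\aBarDagger_k = \frac{k}{\e_L^T \bsG^{-1}\e_L}\bsG^{-1}\e_L$, which is manifestly proportional to $k$ and agrees with the scaling proof.
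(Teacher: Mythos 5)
Your proof is correct and follows essentially the same route as the paper's: a change of variables $\a = k\mathbf{t}$ mapping the feasible set $\{\a(L)=k\}$ bijectively onto $\{\mathbf{t}(L)=1\}$, under which the objective scales by the positive constant $k^2$, so the minimizers correspond. Your explicit appeal to the positive definiteness of $\bsG$ for uniqueness of the minimizer is a small but worthwhile tightening that the paper's argmin-chain leaves implicit.
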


\begin{proof}
\begin{align*}
\aBarDagger_k
&= \arg \min_{ \a \in \Rbb^L, \a(L) = k }
\a^T \bsG \a\\
&= k \left( \arg \min_{ \mathbf{t} \in \Rbb^L, \mathbf{t}(L) = 1 }
(k \mathbf{t}^T) \bsG (k \mathbf{t}) \right) \quad ({\rm Let\ } \a = k \mathbf{t}.)\\
&= k \left( \arg \min_{ \mathbf{t} \in \Rbb^L, \mathbf{t}(L) = 1 }
k^2 \mathbf{t}^T \bsG \mathbf{t} \right)\\
&= k \left( \arg \min_{ \mathbf{t} \in \Rbb^L, \mathbf{t}(L) = 1 }
\mathbf{t}^T \bsG \mathbf{t} \right)\\
&= k \aBarDagger_1. \qedhere
\end{align*}
\end{proof}

The closed-form expression of $\aBarDagger_1$ can be readily obtained by solving a linear system
as stated in the following theorem, which is the key for the low complexity of our method.
\begin{theorem}
\label{theorem:aBarDagger1}
Let $\aBarDagger_1$ be the optimal solution to the QP in \eqref{equation:RelaxedQPs} with the constraint $\a(L) = 1$, then
\begin{equation*}
\aBarDagger_1 =
\begin{bmatrix}
\rr\\
1
\end{bmatrix},
\end{equation*}
where
\begin{equation}
\label{equation:r}
\rr = - 
\Bigl( \bsG(1\!:\!L-1,1\!:\!L-1) \Bigr)^{-1}
\bsG(1\!:\!L-1,L).
\end{equation}
\end{theorem}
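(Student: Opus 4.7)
The plan is to eliminate the equality constraint $\a(L)=1$ by direct substitution, reducing the problem to an unconstrained quadratic minimization in $\Rbb^{L-1}$ whose stationary point can be read off from the normal equations. Alternatively one can invoke the Lagrange multiplier method suggested in the paper, but substitution is more transparent here since the constraint fixes a single coordinate.

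First I would partition $\a = \bmx \u \\ 1 \emx$ with $\u \in \Rbb^{L-1}$, and correspondingly write
\begin{equation*}
\bsG = \bmx \bsG_{11} & \bsG_{12} \\ \bsG_{12}^T & G_{22} \emx,
\end{equation*}
where $\bsG_{11} = \bsG(1\!:\!L\!-\!1, 1\!:\!L\!-\!1)$, $\bsG_{12} = \bsG(1\!:\!L\!-\!1, L)$, and $G_{22} = \bsG(L,L)$; symmetry of $\bsG$ is immediate from its definition in \eqref{equation:G}. Substituting into the objective yields
\begin{equation*}
\a^T \bsG \a = \u^T \bsG_{11} \u + 2\, \bsG_{12}^T \u + G_{22},
\end{equation*}
so the original constrained QP in $\Rbb^L$ reduces to the unconstrained convex QP $\min_{\u \in \Rbb^{L-1}}\, \u^T \bsG_{11} \u + 2 \bsG_{12}^T \u + G_{22}$.

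Next I would argue that $\bsG_{11}$ is invertible. Since $\bsG = \I - \tfrac{P}{1+P\norm{\h}^2}\h\h^T$ is positive definite (as noted in the paragraph following \eqref{equation:G}), every principal submatrix of $\bsG$ is also positive definite, hence $\bsG_{11}$ is positive definite and therefore invertible. The reduced QP is thus strictly convex, so its unique minimizer is determined by the first-order condition $2 \bsG_{11} \u + 2 \bsG_{12} = \0$, giving $\u = -\bsG_{11}^{-1}\bsG_{12} = \rr$ as defined in \eqref{equation:r}.

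Finally I would assemble $\aBarDagger_1 = \bmx \rr \\ 1 \emx$ and note that the minimizer of the constrained problem is unique because $\bsG$ is positive definite on all of $\Rbb^L$, so it must coincide with the candidate produced above. The main obstacle, such as it is, is purely bookkeeping: writing the block structure correctly and justifying invertibility of the leading principal submatrix; no genuinely hard estimate or nontrivial identity is needed, which aligns with the paper's remark that the closed form is the engine behind the low complexity.
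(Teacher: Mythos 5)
Your proof is correct, but it takes a different (and in one respect more complete) route than the paper. The paper eliminates nothing: it forms the Lagrangian $\mathcal{L}(\a,\lambda)=\a^T\bsG\a+\lambda(\a(L)-1)$, sets the gradient to zero to get $2\bsG\a+[\0;\lambda]=\0$, writes this in block form, and reads off $\rr=-\bsG(1\!:\!L-1,1\!:\!L-1)^{-1}\bsG(1\!:\!L-1,L)$ from the top block of equations (the multiplier $\lambda$ is then determined by the last row but never needed). You instead substitute the constraint directly, reducing to the unconstrained strictly convex problem $\min_{\u}\,\u^T\bsG_{11}\u+2\bsG_{12}^T\u+G_{22}$ and solving the normal equations. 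The two derivations produce the identical linear system for $\rr$; yours has the advantage of explicitly justifying that $\bsG_{11}$ is invertible (as a principal submatrix of a positive definite matrix) and that the minimizer is unique, points the paper passes over silently, while the paper's multiplier formulation is the one it reuses rhetorically (``linear and particularly simple''). Either argument is acceptable; there is no gap in yours.
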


\begin{proof}
The QP in \eqref{equation:RelaxedQPs} has only an equality constraint,
and thus is linear and particularly simple \cite{Luenberger2008}.
We now derive the closed-form solution with the Lagrange multiplier method.
Let the Lagrange multiplier associated with the constraint $\a(L) = 1$ be $\lambda \geq 0$,
then the Lagrangian is
$$\mathcal{L}(\a,\lambda) = \a^T \bsG \a + \lambda \left( \a(L) - 1 \right).$$
The optimal solution can be obtained by letting the derivative of the Lagrangian be zero, i.e.,
\begin{align*}
\frac{\partial}{\partial \a}\mathcal{L}(\a,\lambda)
= (\bsG + \bsG^T) \a +
\begin{bmatrix}
\0\\
\lambda
\end{bmatrix}
= 2 \bsG \a +
\begin{bmatrix}
\0\\
\lambda
\end{bmatrix}
= \0.
\end{align*}
Let $\rr = \a(1\!:\!L-1)$, $\lambda = 2 \mu$, and write $\bsG$ and $\a$ as block matrices, then
\begin{equation*}
\begin{aligned}
\left[
\arraycolsep=1pt\def\arraystretch{1.5}
\begin{array}{c|c}
\bsG(1\!:\!L-1,1\!:\!L-1)
&\bsG(1\!:\!L-1,L)\\
\hline
\bsG(L,1\!:\!L-1)
&\bsG(L,L)
\end{array}
\right]
\begin{bmatrix}
\rr\\
1
\end{bmatrix}
+
\begin{bmatrix}
\0\\
\mu
\end{bmatrix}
= \0.
\end{aligned}
\end{equation*}
In the above equation, observe that
\begin{equation*}
\begin{aligned}
\left[
\arraycolsep=1pt\def\arraystretch{1.5}
\begin{array}{c|c}
\bsG(1\!:\!L-1,1\!:\!L-1)
&\bsG(1\!:\!L-1,L)
\end{array}
\right]
\begin{bmatrix}
\rr\\
1
\end{bmatrix}
= \0,
\end{aligned}
\end{equation*}
then
\begin{equation*}
\rr = - 
\Bigl( \bsG(1\!:\!L-1,1\!:\!L-1) \Bigr)^{-1}
\bsG(1\!:\!L-1,L),
\end{equation*}
and the results follow immediately.
\end{proof}

Calculating $\aBarDagger_1$ in Theorem~\ref{theorem:aBarDagger1} has a complexity order of
$O(L^3)$ due to the matrix inversion in the expression of $\bsr$.
We note that this complexity order can be reduced to $O(L)$ by the following lemma.
\begin{lemma}
\label{lemma:r}
Equation~\eqref{equation:r} can be expressed in a simpler form as
\begin{equation*}
\bsr = \frac{\u(L)}{1-\norm{\u(1\!:\!L-1)}^2}\u(1\!:\!L-1),
\end{equation*}
where the ``normalized" channel vector $\u$ is defined as
\begin{equation}
\label{equation:u}
\u = \sqrt{\frac{P}{1+P\norm{\hBar}^2}}\hBar.
\end{equation}
\end{lemma}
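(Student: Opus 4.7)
The plan is to substitute the explicit rank-one structure of $\bsG$ into the expression for $\bsr$ and then invert a rank-one update of the identity by the Sherman--Morrison formula.

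First I would rewrite $\bsG$ in terms of the normalized channel vector defined in \eqref{equation:u}. A direct computation gives
\begin{equation*}
\bsG = \I - \u\u^T,
\end{equation*}
since $\u\u^T = \frac{P}{1+P\norm{\hBar}^2}\hBar\hBar^T$. Taking the relevant blocks,
\begin{equation*}
\bsG(1\!:\!L-1,1\!:\!L-1) = \I_{L-1} - \u(1\!:\!L-1)\u(1\!:\!L-1)^T,
\end{equation*}
\begin{equation*}
\bsG(1\!:\!L-1,L) = -\u(L)\,\u(1\!:\!L-1).
\end{equation*}

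Next I would invert the top-left block using Sherman--Morrison. Writing $\v \triangleq \u(1\!:\!L-1)$, the formula gives
\begin{equation*}
\bigl(\I_{L-1} - \v\v^T\bigr)^{-1} = \I_{L-1} + \frac{\v\v^T}{1 - \norm{\v}^2},
\end{equation*}
provided $\norm{\v}^2 \neq 1$. Plugging into \eqref{equation:r},
\begin{equation*}
\bsr = -\Bigl(\I_{L-1} + \tfrac{\v\v^T}{1-\norm{\v}^2}\Bigr)\bigl(-\u(L)\v\bigr)
     = \u(L)\Bigl(\v + \tfrac{\norm{\v}^2}{1-\norm{\v}^2}\v\Bigr)
     = \frac{\u(L)}{1-\norm{\v}^2}\,\v,
\end{equation*}
which is precisely the claimed identity.

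The only thing to justify carefully is the invertibility condition $\norm{\u(1\!:\!L-1)}^2 \neq 1$. Since $\bsG$ is positive definite (as noted after \eqref{equation:G}), every principal submatrix is positive definite, so in particular $\I_{L-1} - \v\v^T \succ 0$, which forces $\norm{\v}^2 < 1$ and makes the denominator strictly positive. I do not expect any serious obstacle here; the lemma is essentially a one-line application of Sherman--Morrison once the rank-one structure of $\bsG$ is made explicit, and the main point is really to record the resulting $O(L)$ formula that replaces the nominal $O(L^3)$ matrix inversion in Theorem~\ref{theorem:aBarDagger1}.
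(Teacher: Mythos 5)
Your proof is correct and follows essentially the same route as the paper: rewrite $\bsG = \I - \u\u^T$, extract the blocks, and invert the rank-one perturbation of the identity (the paper leaves this last step as ``easy to verify,'' which is exactly the Sherman--Morrison computation you spell out). Your additional observation that positive definiteness of $\bsG$ forces $\norm{\u(1\!:\!L-1)}^2 < 1$ is a worthwhile justification of the invertibility that the paper omits.
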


\begin{proof}
Express $\bsG$ in \eqref{equation:G} and $\bsr$ in \eqref{equation:r} in terms of $\u$,
\begin{gather*}
\bsG = \I - \u\u^T,\\
\bsr = \big(\I^{L-1} - \u(1\!:\!L-1)\u^T(1\!:\!L-1)\big)^{-1}\u(L)\u(1\!:\!L-1),
\end{gather*}
where $\I^{L-1}$ denotes the identity matrix with dimension $L-1$.
Then it is easy to verify that Lemma~\ref{lemma:r} holds.
\end{proof}

With Theorems~\ref{theorem:aLinear} and \ref{theorem:aBarDagger1},
the $K$ solutions $\{ \aBarDagger_k \}$ to the $K$ QPs in \eqref{equation:RelaxedQPs} can be easily obtained.

The next step is to quantize the real-valued approximations $\{ \aBarDagger_k \}$ to integer vectors
by applying the floor or the ceiling functions to each of the elements.
One issue that still remains is how to determine the value of $K$. Intuitively, the larger $K$, the better.
Actually, it is sufficient to set $K$ as
\begin{equation}
\begin{aligned}
\label{equation:K}
K = \arg\max_{\norm{\floor{\aBarDagger_k}}^2 < 1 + P \norm{\hBar}^2} k
\end{aligned}
\end{equation}
according to the following lemma from \cite{Nazer2011}.

\begin{lemma}
\label{lemma:aRange}
For a given channel vector $\h$, the computation rate $\mathcal{R} \left( \h, \a \right)$ is zero
if the coefficient vector $\a$ satisfies
\begin{align}
\norm{\a}^2 \geq 1 + P \norm{\h}^2.
\end{align}
\end{lemma}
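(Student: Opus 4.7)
The plan is to work directly from the closed-form expression for $\bigR(\h,\a)$ given in Theorem~\ref{theorem:ComputationRate} and show that the hypothesis $\norm{\a}^2 \geq 1 + P\norm{\h}^2$ forces the argument of the $\log^+$ to be at most $1$, making the entire expression zero by definition of $\log^+$.

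First I would reduce the claim to a single inequality. Recall
\begin{equation*}
\bigR(\h,\a) = \frac{1}{2}\log^+\!\left(\!\left(\norm{\a}^2 - \frac{P(\h^T\a)^2}{1+P\norm{\h}^2}\right)^{\!-1}\right),
\end{equation*}
so $\bigR(\h,\a) = 0$ precisely when
\begin{equation*}
\norm{\a}^2 - \frac{P(\h^T\a)^2}{1+P\norm{\h}^2} \geq 1.
\end{equation*}
Thus it suffices to show this lower bound under the hypothesis.

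The key step is then Cauchy--Schwarz: $(\h^T\a)^2 \leq \norm{\h}^2\norm{\a}^2$, which upper-bounds the subtracted term by $\frac{P\norm{\h}^2\norm{\a}^2}{1+P\norm{\h}^2}$. Substituting and factoring $\norm{\a}^2$ yields
\begin{equation*}
\norm{\a}^2 - \frac{P(\h^T\a)^2}{1+P\norm{\h}^2} \geq \norm{\a}^2\!\left(1 - \frac{P\norm{\h}^2}{1+P\norm{\h}^2}\right) = \frac{\norm{\a}^2}{1+P\norm{\h}^2}.
\end{equation*}
Invoking the hypothesis $\norm{\a}^2 \geq 1 + P\norm{\h}^2$ immediately gives that this quantity is at least $1$, which is exactly what we needed. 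Since $1+P\norm{\h}^2 > 0$, no sign issues arise.

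There is no real obstacle here; the proof is a one-line application of Cauchy--Schwarz followed by algebraic simplification. The only item worth double-checking is that the inner expression $\norm{\a}^2 - \frac{P(\h^T\a)^2}{1+P\norm{\h}^2}$ remains strictly positive so the inverse is well-defined (which it is, by the same Cauchy--Schwarz bound whenever $\a \neq \0$), and that the $\log^+$ convention indeed kills the value once its argument drops to $1$ or below.
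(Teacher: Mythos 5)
Your proof is correct. Note that the paper itself gives no proof of this lemma --- it is simply quoted from the Nazer--Gastpar reference --- so there is nothing to diverge from; your Cauchy--Schwarz argument, showing $\norm{\a}^2 - \frac{P(\h^T\a)^2}{1+P\norm{\h}^2} \geq \frac{\norm{\a}^2}{1+P\norm{\h}^2} \geq 1$ and hence that the argument of $\log^+$ is at most $1$, is exactly the standard derivation and correctly handles both the positivity of the inner expression and the $\log^+$ convention.
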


\begin{remark}
\label{remark:KPractical}
For high SNR (i.e., large $P$) and large dimensions of $\hBar$, $K$ in \eqref{equation:K} can be quite huge.
However, as we will show in the next section, for i.i.d. Gaussian channel entries with high SNR,
$K$ can be set to a rather small value without degrading the average computation rate.
\end{remark}

In practice, for i.i.d. Gaussian channel entries, we set an upper bound for $K$ as $K_u$,
which is determined off-line according to the simulation results,
such that the simulated average computation rate at 20dB with $K$ being $K_u$
is greater than 99\% of that with $K$ being $K_u+1$.
We set $K_u$ based on rates at 20dB since the value of $K$ influences more the rates at larger SNR,
and 20dB is the the maximum SNR considered in this paper.
Then, we set $K$ as the maximum integer that is no greater than $K_u$ while satisfies \eqref{equation:K}
at the same time, i.e.,
\begin{equation}
\begin{aligned}
\label{equation:KPractical}
K = \arg\max_{\substack{\norm{\floor{\aBarDagger_k}}^2 < 1 + P \norm{\hBar}^2\\
k\leq K_u}} k.
\end{aligned}
\end{equation}
For implementation, $K$ can be easily determined by using a bi-section search.

\subsection{Quantization}
We propose the \emph{successive quantization} algorithm shown in Algorithm \ref{agorithm:SuccessiveQuantization}
to quantize the $K$ real-valued approximations $\{ \aBarDagger_k \}$ to integer-valued vectors
$\{ \aBarDiamond_k \}$ that serve as candidates of a suboptimal coefficient vector
$\aBarDiamond$.
For convenience, define
\begin{align}
\label{equation:f}
f(\w) \triangleq \w^T \bsG \w,
\end{align}
where $\w\in\Rbb^L$, and $\bsG$ is defined in \eqref{equation:G} with $\h$ being nonnegative ordered.
Also, let $\floor{\w}_\ell$ and $\ceil{\w}_\ell$ be the vectors generated from $\w$
by applying the floor and the ceiling operations on the $\ell$\!-th element only, respectively.

\begin{algorithm}
    \DontPrintSemicolon
    \LinesNumbered
    \SetAlgoCaptionSeparator{.}
    \SetKwInOut{Input}{Input}
    \SetKwInOut{Output}{Output}
    \Input{A real-valued vector $\aBarDagger_k \in \Rbb^L$}
    \Output{A coefficient vector $\aBarDiamond_k \in \Zbb^L$ for $\hBar$}

    \BlankLine
    \For{$\ell\leftarrow 1$ \KwTo $L-1$}{
        \eIf{$f\left(\floor{\aBarDagger_k}_\ell\right) < f\left(\ceil{\aBarDagger_k}_\ell\right)$}{\label{line:FloorCondition}
            $\aBarDagger_k \leftarrow \floor{\aBarDagger_k}_\ell$\;
        }{
            $\aBarDagger_k \leftarrow \ceil{\aBarDagger_k}_\ell$\;
        }
    }

    \BlankLine
    $\aBarDiamond_k\leftarrow \aBarDagger_k$\;
    \Return{$\aBarDiamond_k$}

    \caption{Successive Quantization}
    \label{agorithm:SuccessiveQuantization}
\end{algorithm}

To simplify the inequality condition
$f\left(\floor{\aBarDagger_k}_\ell\right) < f\left(\ceil{\aBarDagger_k}_\ell\right)$
in line \ref{line:FloorCondition} of Algorithm~\ref{agorithm:SuccessiveQuantization},
we first introduce the following lemma.
\begin{lemma}
\label{lemma:FloorCondition}
For the function $f(\w) = \w^T \bsG \w$ defined in \eqref{equation:f}
where $\bsG^T = \bsG$,
the inequality condition $f\left(\floor{\w}_\ell\right) < f\left(\ceil{\w}_\ell\right)$ is equivalent to
\begin{align}
\label{equation:FloorCondition}
2 \left(\floor{\w}_\ell\right)^T \bsG(:,\ell) + \bsG(\ell,\ell) > 0.
\end{align}
\end{lemma}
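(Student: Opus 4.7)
The plan is to reduce the inequality $f(\floor{\w}_\ell) < f(\ceil{\w}_\ell)$ to a simple first-order difference by exploiting the fact that $\floor{\w}_\ell$ and $\ceil{\w}_\ell$ differ in exactly one coordinate. First I would set $\u = \floor{\w}_\ell$ and observe that, assuming $\w(\ell) \notin \Zbb$, we have $\ceil{\w}_\ell = \u + \e_\ell$, where $\e_\ell$ denotes the $\ell$-th standard basis vector in $\Rbb^L$. (The degenerate case $\w(\ell) \in \Zbb$ makes both sides of the claimed equivalence reduce to false and so can be handled by a brief side remark.)

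Next I would expand $f(\u + \e_\ell) = (\u + \e_\ell)^T \bsG (\u + \e_\ell)$ using bilinearity, giving
\begin{equation*}
f(\u + \e_\ell) - f(\u) = \u^T \bsG \e_\ell + \e_\ell^T \bsG \u + \e_\ell^T \bsG \e_\ell.
\end{equation*}
The symmetry assumption $\bsG^T = \bsG$ then makes the two cross terms equal, and one recognizes $\bsG \e_\ell = \bsG(:,\ell)$ and $\e_\ell^T \bsG \e_\ell = \bsG(\ell,\ell)$, so that
\begin{equation*}
f(\ceil{\w}_\ell) - f(\floor{\w}_\ell) = 2 \left(\floor{\w}_\ell\right)^T \bsG(:,\ell) + \bsG(\ell,\ell).
\end{equation*}

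From this identity the equivalence is immediate: $f(\floor{\w}_\ell) < f(\ceil{\w}_\ell)$ holds if and only if the right-hand side is positive, which is precisely \eqref{equation:FloorCondition}. There is no real obstacle here; the only thing to be careful about is the trivial case where $\w(\ell)$ is already an integer, and the use of symmetry of $\bsG$ to collapse the two cross terms into the factor of $2$. The whole argument is a one-line first-difference calculation once the key substitution $\ceil{\w}_\ell = \floor{\w}_\ell + \e_\ell$ is made explicit.
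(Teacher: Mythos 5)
Your proof is correct and follows essentially the same route as the paper's: both write $\ceil{\w}_\ell = \floor{\w}_\ell + \e_\ell$, expand $f$ bilinearly, and use the symmetry of $\bsG$ to collapse the cross terms into the factor of $2$. The only minor caveat is your side remark on the integer case: there the left-hand inequality is indeed false, but the right-hand side of \eqref{equation:FloorCondition} need not be, so (as in the paper) one should simply note that the strict inequality forces $\w(\ell)\notin\Zbb$ rather than claim both sides fail.
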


\begin{proof}
$f\left(\floor{\w}_\ell\right) < f\left(\ceil{\w}_\ell\right)$ implies $\floor{\w}_\ell \neq \ceil{\w}_\ell$,
i.e., $\w(\ell)$ is not an integer.
Let $\e_\ell \in \Rbb^L$ be the vector with only one nonzero element $\e_\ell(\ell) = 1$,
then $\ceil{\w}_\ell = \floor{\w}_\ell + \e_\ell$, and
\begin{equation*}
\begin{aligned}
&f\left(\ceil{\w}_\ell\right)
= \left(\ceil{\w}_\ell\right)^T \bsG \ceil{\w}_\ell\\
&= \left( \floor{\w}_\ell + \e_\ell \right)^T \bsG \left( \floor{\w}_\ell + \e_\ell \right)\\
&= \left(\floor{\w}_\ell\right)^T \bsG \floor{\w}_\ell
+ \left(\floor{\w}_\ell\right)^T \bsG \e_\ell
+ \e_\ell^T \bsG \floor{\w}_\ell
+ \e_\ell^T \bsG \e_\ell\\
&= f\left(\floor{\w}_\ell\right)
+ \left(\floor{\w}_\ell\right)^T \bsG(:,\ell)
+ \bsG(\ell,:) \floor{\w}_\ell
+ \bsG(\ell,\ell)\\
&= f\left(\floor{\w}_\ell\right) + 2 \left(\floor{\w}_\ell\right)^T \bsG(:,\ell) + \bsG(\ell,\ell).
\end{aligned}
\end{equation*}
Obviously, $f\left(\floor{\w}_\ell\right) < f\left(\ceil{\w}_\ell\right)$ is equivalent to
\begin{align*}
2 \left(\floor{\w}_\ell\right)^T \bsG(:,\ell) + \bsG(\ell,\ell) > 0.\tag*{\qedhere}
\end{align*}
\end{proof}

\begin{lemma}
\label{lemma:FloorCondition2}
With Lemma~\ref{lemma:FloorCondition}, the inequality condition
$f\left(\floor{\aBarDagger_k}_\ell\right) < f\left(\ceil{\aBarDagger_k}_\ell\right)$
in line~\ref{line:FloorCondition} of Algorithm~\ref{agorithm:SuccessiveQuantization}
can be simplified as
\begin{align*}
2\floor{\aBarDagger_k(\ell)} - 2\left(\left(\floor{\aBarDagger_k}_\ell\right)^T\u\right)\u(\ell) + 1 - \u(\ell)^2 < 0,
\end{align*}
where $\u$ is the normalized channel vector as defined in \eqref{equation:u}.
\end{lemma}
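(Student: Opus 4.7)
The plan is to apply Lemma~\ref{lemma:FloorCondition} and then reduce the resulting inequality by substituting the explicit form of $\bsG$ in terms of the normalized channel vector $\u$. Since $\hBar$ is nonnegative ordered, we have from \eqref{equation:u} the identity $\bsG = \I - \u\u^T$ (as was already used in the proof of Lemma~\ref{lemma:r}), so everything reduces to a direct computation with scalar entries of $\u$.

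First, I would specialize Lemma~\ref{lemma:FloorCondition} to $\w = \aBarDagger_k$, so that the algorithmic test $f(\floor{\aBarDagger_k}_\ell) < f(\ceil{\aBarDagger_k}_\ell)$ is replaced by the equivalent inequality
\begin{equation*}
2\bigl(\floor{\aBarDagger_k}_\ell\bigr)^T \bsG(:,\ell) + \bsG(\ell,\ell) > 0.
\end{equation*}
Next, using $\bsG = \I - \u\u^T$, I would read off the column and diagonal entries explicitly: the $\ell$-th column is $\bsG(:,\ell) = \e_\ell - \u(\ell)\,\u$, where $\e_\ell$ is the $\ell$-th standard basis vector, and the diagonal entry is $\bsG(\ell,\ell) = 1 - \u(\ell)^2$. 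Plugging these into the inequality gives
\begin{equation*}
2\floor{\aBarDagger_k(\ell)} - 2\,\u(\ell)\,\bigl(\floor{\aBarDagger_k}_\ell\bigr)^T\u + 1 - \u(\ell)^2,
\end{equation*}
which is precisely the expression appearing in the claim (up to rearrangement of the middle term as $-2\bigl((\floor{\aBarDagger_k}_\ell)^T\u\bigr)\u(\ell)$, using commutativity of scalar multiplication).

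There is really no obstacle here; the statement is essentially a restatement of Lemma~\ref{lemma:FloorCondition} after plugging in the rank-one update form of $\bsG$. The only thing one must be careful about is bookkeeping of the sign of the test (i.e., which direction of the inequality corresponds to ``take the floor'' in line~\ref{line:FloorCondition} of Algorithm~\ref{agorithm:SuccessiveQuantization}), and the observation that $(\floor{\aBarDagger_k}_\ell)^T \e_\ell = \floor{\aBarDagger_k(\ell)}$, which is what produces the leading $2\floor{\aBarDagger_k(\ell)}$ term.

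Beyond justifying the equivalence, it is worth emphasizing the computational point that this reformulation provides: once $\u$ and the inner product $(\floor{\aBarDagger_k}_\ell)^T\u$ are available, the test requires only $O(1)$ operations per coordinate, and the inner product itself can be updated incrementally across the successive-quantization loop in $O(1)$ per update rather than recomputed from scratch. This is the main reason for bothering to rewrite the condition at all, and it should be noted in the concluding sentence of the proof.
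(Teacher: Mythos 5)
Your proof is correct and is exactly the substitution the paper has in mind (its own proof is omitted as ``straightforward by writing $\bsG$ in terms of $\u$''): specialize Lemma~\ref{lemma:FloorCondition} to $\w=\aBarDagger_k$, use $\bsG=\I-\u\u^T$ so that $\bsG(:,\ell)=\e_\ell-\u(\ell)\u$ and $\bsG(\ell,\ell)=1-\u(\ell)^2$, and read off the scalar form via $(\floor{\aBarDagger_k}_\ell)^T\e_\ell=\floor{\aBarDagger_k(\ell)}$. The one point you flag but leave unresolved --- the direction of the test --- is worth nailing down: your computation shows that $f(\floor{\aBarDagger_k}_\ell)<f(\ceil{\aBarDagger_k}_\ell)$ is equivalent to $2\floor{\aBarDagger_k(\ell)}-2\bigl((\floor{\aBarDagger_k}_\ell)^T\u\bigr)\u(\ell)+1-\u(\ell)^2>0$, i.e., with ``$>0$'', whereas the lemma as printed writes ``$<0$''. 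The ``$<0$'' form is the complementary test actually used in Algorithm~\ref{algorithm:QPRApproachCode} (floor by default, then increment to the ceiling when the expression is negative), so the inequality in the lemma's statement is reversed relative to Lemma~\ref{lemma:FloorCondition}; your derivation is the mathematically correct one, and your closing remark about the $O(1)$ incremental update of the inner product is precisely why the reformulation is made.
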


\begin{proof}
The proof is straightforward by writing $\bsG$ in terms of $\u$, and thus is omitted here.
\end{proof}

After the quantization, a suboptimal coefficient vector $\aBarDiamond$ for $\hBar$ is obtained with
\begin{equation}
\begin{aligned}
\label{equation:aBarDiamond}
\aBarDiamond = \arg \min_{ \a \in \{ \aBarDiamond_k \} } \a^T \bsG \a.
\end{aligned}
\end{equation}

Finally, a suboptimal coefficient vector $\aDiamond$
for the original channel vector $\h$ is recovered from
$\aBarDiamond$ according to Remark~\ref{remark:Transformation}.

We summarize our proposed \emph{QP relaxation} method in Algorithm~\ref{agrm:QPRApproachOutline}.
The pseudocode is shown in Algorithm~\ref{algorithm:QPRApproachCode},
where the function $[\bar{\w},\p]={\rm sort}(\w)$ sorts the elements in $\w$ in ascending order,
returns the sorted vector $\bar{\w}$, and stores the original indices of the elements as vector $\p$,
the function ${\rm floor}(\w)$ applies the floor operation to each element of $\w$
and returns the resulted integer vector.

\begin{algorithm}
    \DontPrintSemicolon
    \SetAlgoCaptionSeparator{.}
    \SetKwInOut{Input}{Input}
    \SetKwInOut{Output}{Output}
    \Input{A channel vector $\h \in \Rbb^L$, power $P$,\\
        an upper bound $K_u$ (determined off-line) for $K$}
    \Output{A coefficient vector $\aDiamond \in \Zbb^L$ for $\h$}

    \begin{enumerate}[leftmargin=*, rightmargin=10pt]
    \item \label{item:outline:Preprocess} Preprocess $\h$ to the nonnegative ordered $\hBar$
        with Remark~\ref{remark:Transformation}.
    \item \label{item:outline:CalculateaBarDagger1} Calculate $\aBarDagger_1$,
        with Theorem~\ref{theorem:aBarDagger1} and Lemma~\ref{lemma:r}.
    \item \label{item:outline:DetermineK} Determine $K$ with \eqref{equation:KPractical}.
    \item \label{item:outline:CalculateaBarDaggerk} Calculate $\{ \aBarDagger_k \}$,
        i.e., the real-valued approximations of the optimal coefficient vector for $\hBar$,
        using Theorem~\ref{theorem:aLinear}.
    \item \label{item:outline:Quantize} Quantize $\{ \aBarDagger_k \}$
        to integer-valued vectors $\{ \aBarDiamond_k \}$ with Algorithm~\ref{agorithm:SuccessiveQuantization}.
    \item \label{item:outline:Select} Select a vector from $\{ \aBarDiamond_k \}$ to be a suboptimal coefficient vector
        $\aBarDiamond$ for $\hBar$ using \eqref{equation:aBarDiamond}.
    \item \label{item:outline:Recover} Recover a suboptimal coefficient vector $\aDiamond$ for $\h$
        from $\aBarDiamond$ according to Remark~\ref{remark:Transformation}.
    \end{enumerate}

    \caption{The Proposed QP Relaxation Method -- Outline}
    \label{agrm:QPRApproachOutline}
\end{algorithm}

\begin{algorithm}
    \setstretch{0.8}
    \DontPrintSemicolon
    \LinesNumbered
    \SetAlgoCaptionSeparator{.}
    \small
    \SetKwInOut{Input}{Input}
    \SetKwInOut{Output}{Output}
    \SetKwFunction{KwSign}{sign}
    \SetKwFunction{KwAbs}{abs}
    \SetKwFunction{KwSort}{sort}
    \SetKwFunction{KwFloor}{floor}
    \Input{A channel vector $\h \in \Rbb^L$, power $P$,\\
        an upper bound $K_u$ (determined off-line) for $K$}
    \Output{A coefficient vector $\aDiamond \in \Zbb^L$ for $\h$}

    \BlankLine
    \tcp{Preprocessing}
    $\bst\leftarrow$\KwSign{$\h$}
    \tcp*[h]{Get signs of entries in $\h$.}\;
    \tcp{Sort \KwAbs{$\h$} in ascending order as $\hBar$.}
    \tcp{$\p$ stores original indices.}
    $(\hBar,\p)\leftarrow$\KwSort{\KwAbs{$\h$}}\;
    $b\leftarrow 1 + P\norm{\h}^2$\tcp*[h]{A constant for efficiency}\;

    \BlankLine
    \tcp{Calculate $\aBarDagger_1$.}
    $\u\leftarrow (P/b)^{1/2}\hBar$
    \tcp*[h]{Normalized channel vector}\;
    $\bsr\leftarrow \frac{\u(L)}{1-\norm{\u(1:L-1)}^2}\u(1\!:\!L-1)$\;
    $\aBarDagger_1(1\!:\!L-1)\leftarrow \bsr$\;
    $\aBarDagger_1(L)\leftarrow 1$\;

    \BlankLine
    \tcp{Determine $K$.}
    \eIf{$\norm{\KwFloor{$K_u \aBarDagger_1$}}^2 < b$}{
        $K\leftarrow K_u$
    }(\tcp*[h]{Bi-section search}){
        $K_l\leftarrow 1$\;
        \While{$K_u\neq K_l + 1$}{
            $K\leftarrow \KwFloor{$(K_u + K_l)/2$}$\;
            \eIf{$\norm{\KwFloor{$K\aBarDagger_1$}}^2 < b$}{
                $K_l\leftarrow K$\;
            }{
                $K_u\leftarrow K$\;
            }
        }
        $K\leftarrow K_l$\;
    }

    \BlankLine
    \tcp{Quantization}
    $\aBarDiamond(1\!:\!L-1)\leftarrow \0$
    \tcp*[h]{Initialize $\aBarDiamond$ and $f_{\min}$.}\;
    $\aBarDiamond(L)\leftarrow 1$\;
    $f_{\min}\leftarrow \norm{\aBarDiamond}^2 - \big((\aBarDiamond)^T\u\big)^2$
    \tcp*[h]{$f\triangleq \a^T\bsG\a$}\;
    \For{$k\leftarrow 1$ \KwTo $K$}{
        $\aBarDagger\leftarrow k\aBarDagger_1$
        \tcp*[h]{Calculate $\aBarDagger_k$.}\;
        $d\leftarrow (\aBarDagger)^T\u$
        \tcp*[h]{Temporary variable}\;
        \For{$\ell\leftarrow 1$ \KwTo $L-1$}{
            $v\leftarrow \aBarDagger(\ell)$\;
            $\aBarDagger(\ell)\leftarrow \KwFloor{$\aBarDagger(\ell)$}$\;
            $d\leftarrow d + (\aBarDagger(\ell) - v)\u(\ell)$\;
            \If{$2\aBarDagger(\ell) - 2d\u(\ell) + 1 - \u(\ell)^2 < 0$}{
                $\aBarDagger(\ell)\leftarrow \aBarDagger(\ell) + 1$\;
                $d\leftarrow d + \u(\ell)$\;
            }
        }

        \BlankLine
        \tcp{Update the record.}
        $f\leftarrow \norm{\aBarDagger}^2 - d^2$\;
        \If{$f < f_{\min}$}{
            $\aBarDiamond\leftarrow \aBarDagger$\;
            $f_{\min}\leftarrow f$\;
        }
    }
    
    \BlankLine
    \tcp{Computation rate is $1/2\log(1/f_{\min})$.}
    \tcp{Recover the coefficient vector.}
    \For{$\ell\leftarrow 1$ \KwTo $L$}{
        $\aDiamond(\p(\ell))\leftarrow \bst(\p(\ell))\aBarDiamond(\ell)$\;
    }

    \BlankLine
    \Return{$\aDiamond$}

    \caption{The Proposed QP Relaxation Method -- Pseudocode}
    \label{algorithm:QPRApproachCode}
\end{algorithm}

\subsection{Complexity Analysis}
Here we analyze the complexity of our algorithm, in terms of the number of flops required.
Referring to the outline in Algorithm~\ref{agrm:QPRApproachOutline},
the processing of $\h$ in step~\ref{item:outline:Preprocess} involves recording the signs of the elements
and sorting the elements, and takes $O(L\log(L))$ flops.
Calculating $\aBarDagger_1$ in step~\ref{item:outline:CalculateaBarDagger1}
has a complexity of $O(L)$.
For the bi-section search applied to determine $K$ in step~\ref{item:outline:DetermineK},
the maximum number of loops required to execute is $O(\log(K_u))$, the number of flops in each loop is $O(L)$,
and thus the maximum cost is $O(\log(K_u)L)$.
Step~\ref{item:outline:CalculateaBarDaggerk} takes $O(KL)$ flops.
By introducing appropriate temporary variables $b$ and $d$
as shown in Algorithm~\ref{algorithm:QPRApproachCode},
the successive quantization of a real-valued approximation $\aBarDagger_k$ can be implemented
in an efficient way in $O(L)$ flops.
Thus, the complexity of quantizing all the $K$ real-valued approximations is $O(KL)$.
Selecting a coefficient vector from the quantized vector set in step~\ref{item:outline:Select}
has a cost of $O(KL)$.
Step~\ref{item:outline:Recover} takes $O(L)$ flops.
In summary, the complexity of the method is $O(L(\log(L)+\log(K_u)+K))$.

However, the above analyzed complexity expression involves the experiment-based $K_u$,
and its exact order with respect to the dimension $L$ is intractable.
As an alternative, we use an upper bound to approximate the cost.
According to \eqref{equation:K}, it is easy to see that $K$ and $K_u$ are at most of order $O(\sqrt{P\norm{\h}^2})$.
Then, the complexity of our method is $O(L(\log(L)+\sqrt{P\norm{\h}^2}))$.
We reserve the power $P$ in the expression
since we may also care about how the complexity varies when the SNR gets large.

In the complexity expression above, since the square root function is strictly concave,
it follows from Jensen's inequality that $\mathbb{E}(\sqrt{\norm{\h}^2})\leq\sqrt{\mathbb{E}(\norm{\h}^2)}$.
Specifically, for i.i.d. standard Gaussian channel entries, the expectation of $\norm{\h}^2$ is $L$,
and thus the corresponding average complexity of the proposed method becomes $O(L\log(L)+P^{0.5}L^{1.5})$.
It is easy to see that the complexity is of order 1.5 with respect to the dimension~$L$.

\subsection{Extension to the Complex-Valued Channel Model}
We now consider the complex-valued channel model of the AWGN networks,
and demonstrate how to apply the proposed QP relaxation method for complex-valued channels.
The complex-valued channel model is defined as below.

\begin{definition}
\label{definition:ComplexChannelModel}
\emph{(Complex-Valued Channel Model)}
In an AWGN network, each relay (indexed by $m=1,2,\cdots,M$) observes a noisy linear combination of
the transmitted signals through the channel,
\begin{align}
\label{equation:ComplexChannelModel}
\y_m = \sum_{\ell=1}^L \h_m(\ell)\x_\ell + \z_m,
\end{align}
where $\x_\ell\in\Cbb^n$ with the power constraint $\frac{1}{n}\norm{\x_\ell}^2 \leq P$
is the transmitted codeword from source $\ell$ ($\ell = 1,2,\cdots,L$),
$\h_m\in\Cbb^L$ is the channel vector to relay $m$,
$\z_m\in\Cbb^n$ is the noise vector with entries being i.i.d. Gaussian,
i.e., $\z_m\sim\mathcal{CN}\!\left(\0,\I\right)$, and $\y_m$ is the signal received at relay~$m$.
\end{definition}

Similar to what we have done for the real-valued channel model, we will focus on one relay,
and ignore the subscript ``$m$'' for notational convenience.

Writing the summation in \eqref{equation:ComplexChannelModel} in the vector product form,
\eqref{equation:ComplexChannelModel} becomes
\begin{align}
\label{equation:ComplexChannelModelVectorForm}
\y = [\x_1,\x_2,\cdots,\x_L]\h + \z.
\end{align}
It is well-known that a complex-valued channel model can be written in its real-valued equivalent form.
Let $\Re(\w)$ denote the vector composed of the real part of $\w$,
and $\Im(\w)$ denote the vector composed of the imaginary part of $\w$.
The complex-valued equation~\eqref{equation:ComplexChannelModelVectorForm} has the following
real-equivalent form
\begin{equation}
\small
\begin{aligned}
\label{equation:ComplexChannelModelRealForm}
[\Re(\y),\Im(\y)]=[\Re(\x_1),\Re(\x_2),\cdots,\Re(\x_L),\Im(\x_1),\Im(\x_2),\cdots,\Im(\x_L)]
\times
\begin{bmatrix}
\Re(\h),\Im(\h)\\
-\Im(\h),\Re(\h)
\end{bmatrix}
+ [\Re(\z),\Im(\z)].
\end{aligned}
\end{equation}
It is obvious that
\begin{equation}
\small
\begin{aligned}
\label{equation:ComplexChannelModelRealForm2}
\Re(\y)&=[\Re(\x_1),\Re(\x_2),\cdots,\Re(\x_L), \Im(\x_1),\Im(\x_2),\cdots,\Im(\x_L)]
\times
\begin{bmatrix}
\Re(\h)\\
-\Im(\h)
\end{bmatrix}
+ \Re(\z),\\
\Im(\y)&=[\Re(\x_1),\Re(\x_2),\cdots,\Re(\x_L), \Im(\x_1),\Im(\x_2),\cdots,\Im(\x_L)]
\times
\begin{bmatrix}
\Im(\h)\\
\Re(\h)
\end{bmatrix}
+ \Im(\z).
\end{aligned}
\end{equation}
Then we can view
$\begin{bmatrix}
\Re(\h)\\
-\Im(\h)
\end{bmatrix}$
and
$\begin{bmatrix}
\Im(\h)\\
\Re(\h)
\end{bmatrix}$
as two $2L$-dimensional real-valued channels,
and view $\Re(\x_\ell)$ and $\Im(\x_\ell)$ as two independent $n$-dimensional real-valued transmitted codewords.
Assume equal power allocation on the real part and the imaginary part of each transmitted codeword,
i.e., $\norm{\Re(\x_\ell)}^2=\norm{\Im(\x_\ell)}^2$,
then the power constraint of each real-valued transmitted codeword is
$\frac{1}{n}\norm{\Re(\x_\ell)}^2\leq\frac{1}{2}P$ and $\frac{1}{n}\norm{\Im(\x_\ell)}^2\leq\frac{1}{2}P$.

Based on the above interpretation, we can apply the proposed QP relaxation method to
each of the two $2L$-dimensional channels to find the corresponding coefficient vectors.
It should be noted that we only need to find the coefficient vector for one of the $2L$-dimensional channels,
which saves half of the computation cost.
Let $\a$ be a Gaussian integer, and assume the found coefficient vector for
$\begin{bmatrix}
\Re(\h)\\
-\Im(\h)
\end{bmatrix}$
is
$\begin{bmatrix}
\Re(\a)\\
-\Im(\a)
\end{bmatrix}$.
Then, according to Theorem~\ref{theorem:ProblemTransformation}, the coefficient vector for
$\begin{bmatrix}
\Im(\h)\\
\Re(\h)
\end{bmatrix}$
is
$\begin{bmatrix}
\Im(\a)\\
\Re(\a)
\end{bmatrix}$.
In this sense, for each complex-valued channel vector $\h$,
we can find a Gaussian integer as the best coefficient vector $\a$ using the QP relaxation method.

\section{Numerical Results}
\label{section:NumericalResults}
\begin{figure}[!t]
    \centering
    \includegraphics[width=240pt]{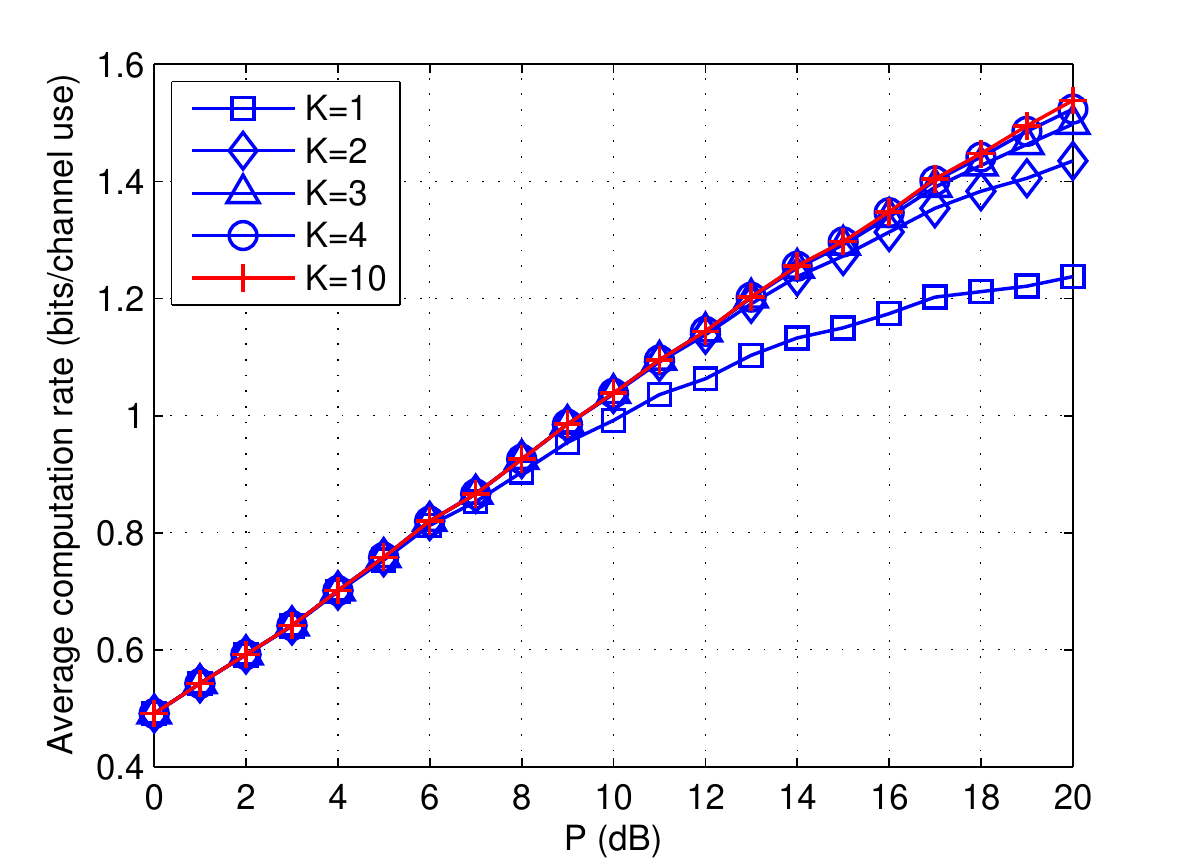}\\
    \caption{Average computation rate for $L=4$ using our QPR method with different $K$.}
    \label{figure:K_L4}
\end{figure}

In this section, we present some numerical results to demonstrate the effectiveness and efficiency of
our QP relaxation approach.
As explained before, finding the coefficient vector for a complex-valued channel can be transformed to
finding the coefficient vector for a real-valued channel.
Thus, we focus on the real-valued channels here.
We consider the case where the entries of the channel vector $\h$ are i.i.d. standard Gaussian,
i.e., $\h\sim\mathcal{N}(\0,\I)$.
In our simulations, the dimension $L$ ranges from 2 to 16, and the power $P$ ranges from 0dB to 20dB.
For a given dimension and a given power, we randomly generate $10000$ instances of the channel vector,
and apply the QP relaxation method to find the coefficient vectors,
and calculate the corresponding average computation rate.

We first show that as stated in Remark~\ref{remark:KPractical}, for high dimension and large power,
the number of real-valued approximations $K$ can be set to a rather small value
without degrading the rate apparently.
As shown in Figure~\ref{figure:K_L4}, for dimension $L=4$ and power $P$ from 0dB to 20dB,
the average computation rate quickly converges as $K$ increases from 1 to 4.
Further increasing $K$ up to 10 incurs additional computational cost
with little improvement in the average computation rate.

With the above observation, it is reasonable to introduce the upper bound $K_u$ for $K$,
and adopt the criterion in \eqref{equation:KPractical} to determine $K$.
$K_u$ can be calculated off-line by simulations prior to applying the method,
which incurs no additional processing complexity in real-time.
The values of $K_u$ according to the simulation results are listed in Table~\ref{table:Ku}.

\setlength{\tabcolsep}{3pt} 
\begin{table}[!ht]
    \caption{$K_u$ in \eqref{equation:KPractical} for the proposed QPR method}
    \label{table:Ku}
    \centering
    \begin{tabular}{c|*{15}{r}}
        \hline
        $L$ &2 & 3 & 4 & 5 & 6 & 7 & 8 & 9 & 10 & 11 & 12 & 13 & 14 & 15 & 16\\
        \hline
        $K_u$ &2 & 3 & 4 & 5 & 5 & 5 & 6 & 6 & 6 & 6 & 7 & 6 & 6 & 6 & 4\\
        \hline
    \end{tabular}
\end{table}

\begin{figure*}[!b]
    \centering
    \subfloat[$L=2$]{
        \label{figure:Rate_L2}
        \includegraphics[width=240pt]{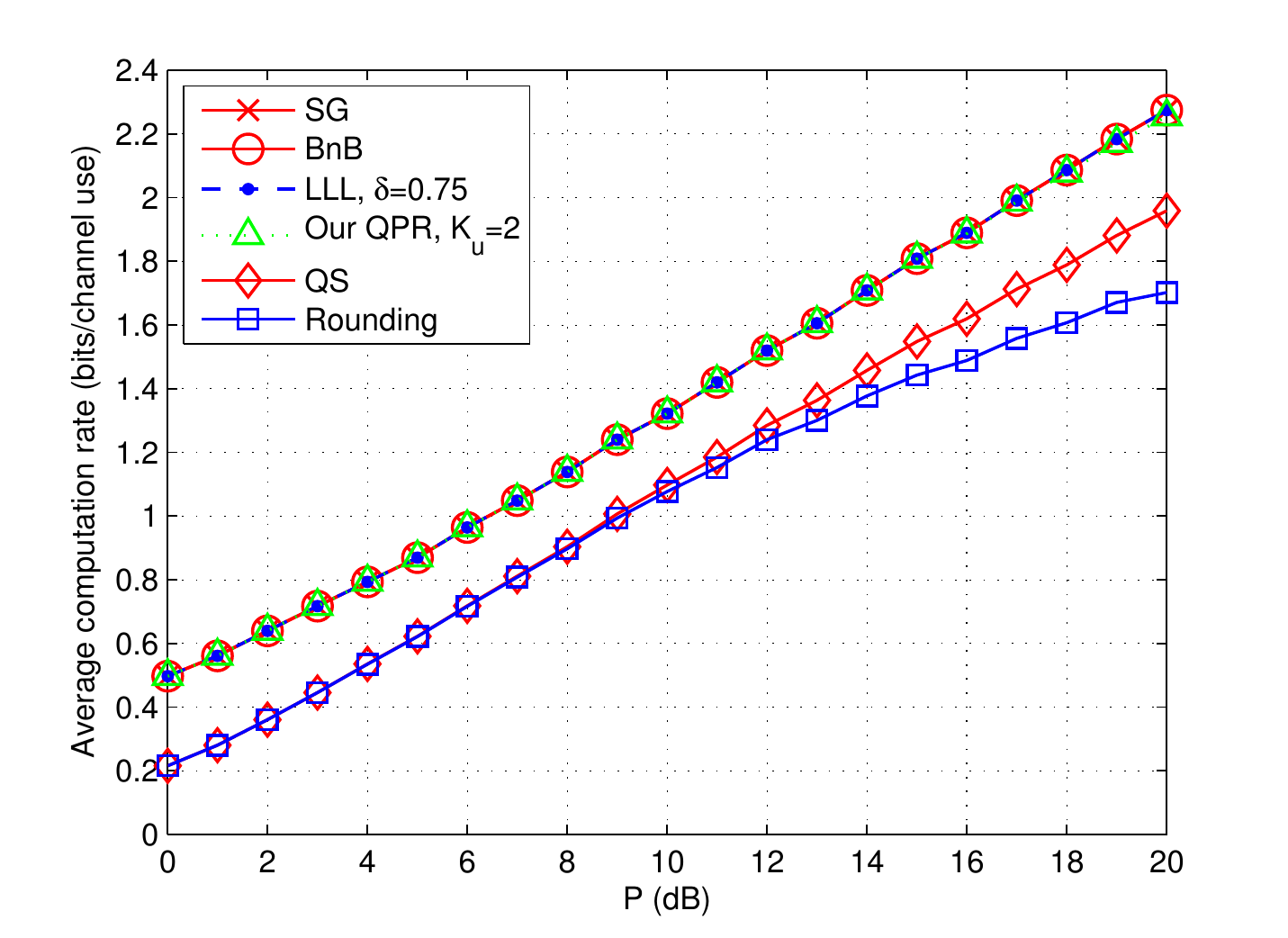}}
    \subfloat[$L=4$]{
        \label{figure:Rate_L4}
        \includegraphics[width=240pt]{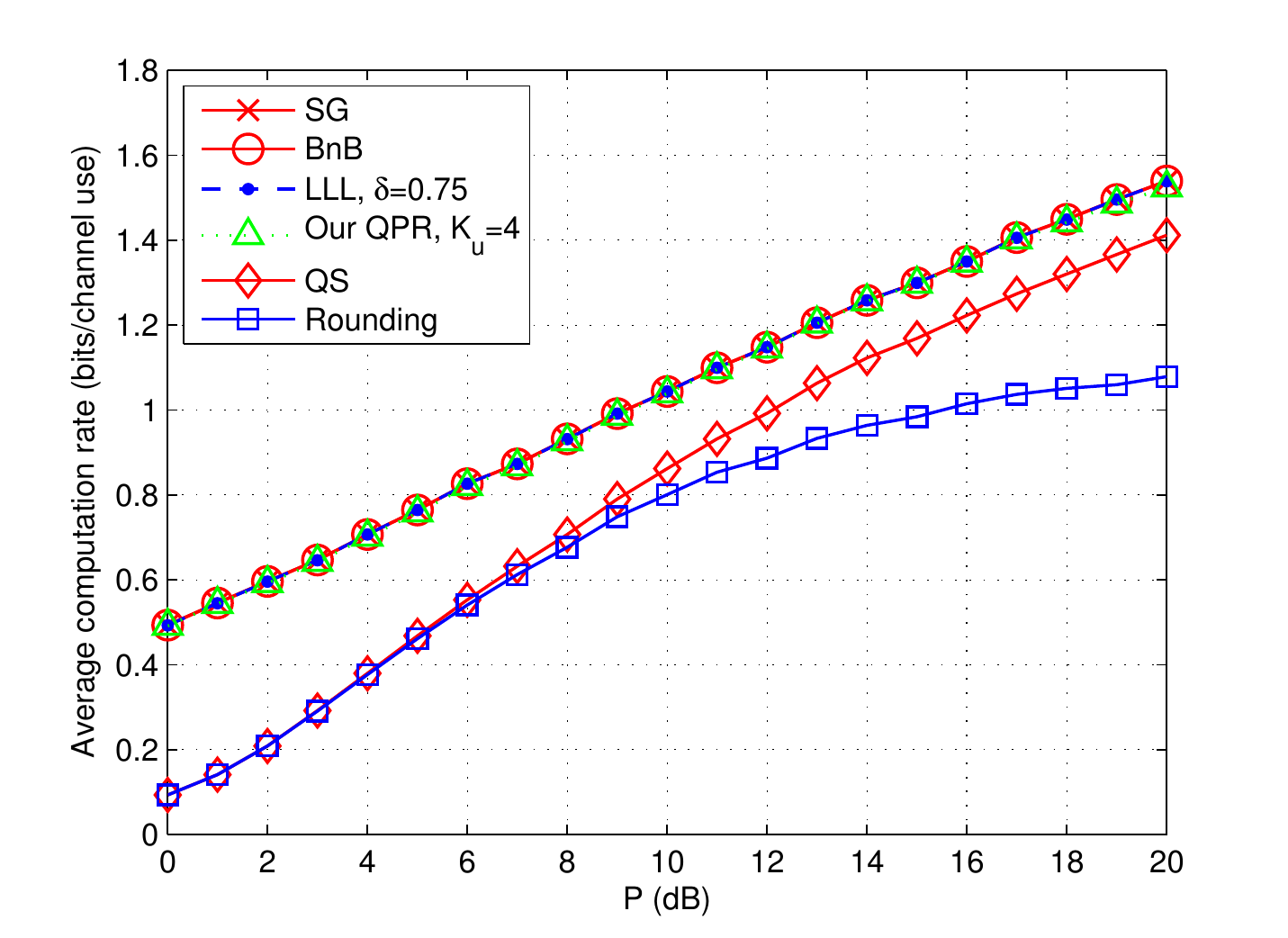}}\\
    \subfloat[$L=8$]{
        \label{figure:Rate_L8}
        \includegraphics[width=240pt]{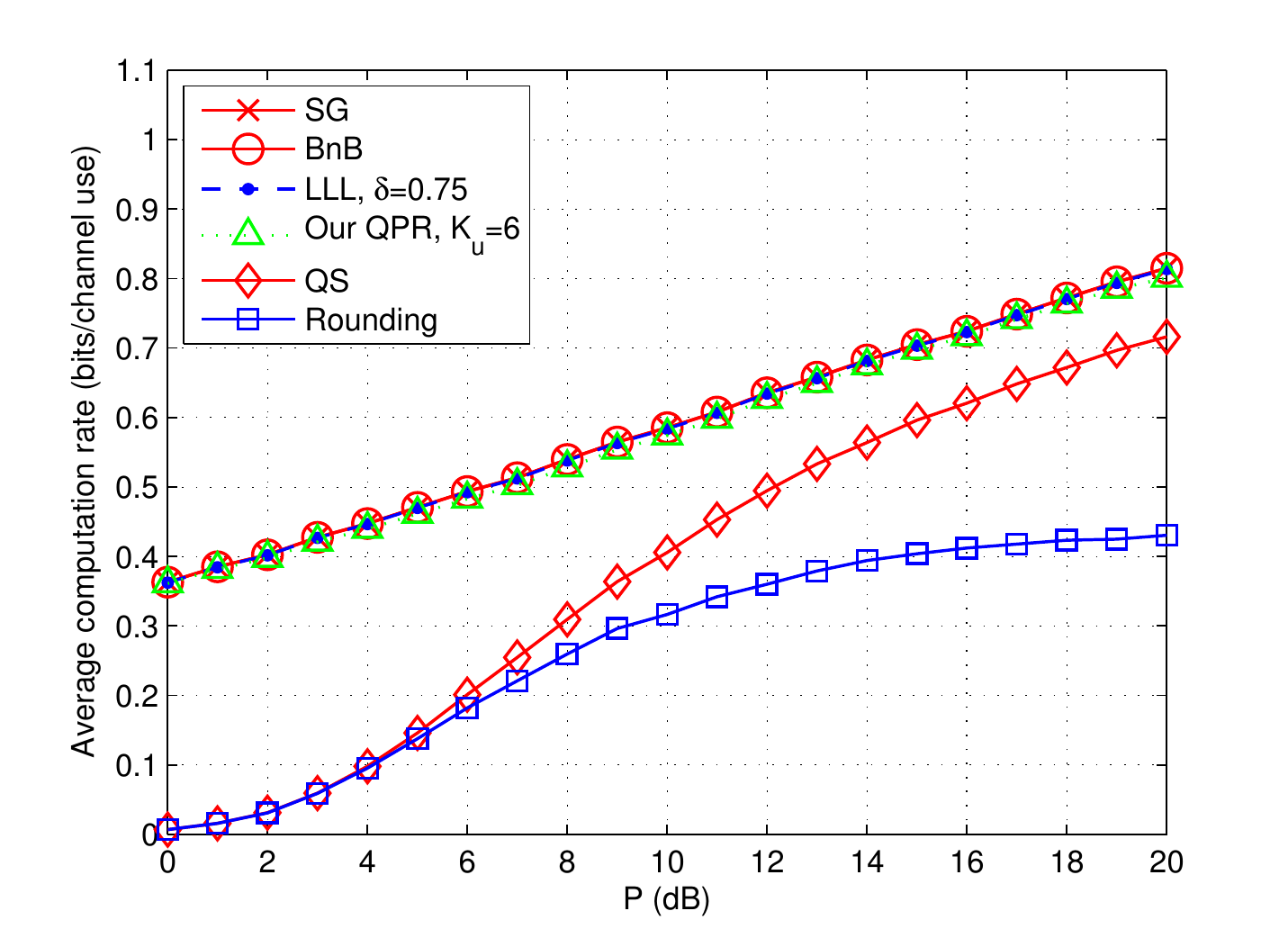}}
    \subfloat[$L=16$]{
        \label{figure:Rate_L16}
        \includegraphics[width=240pt]{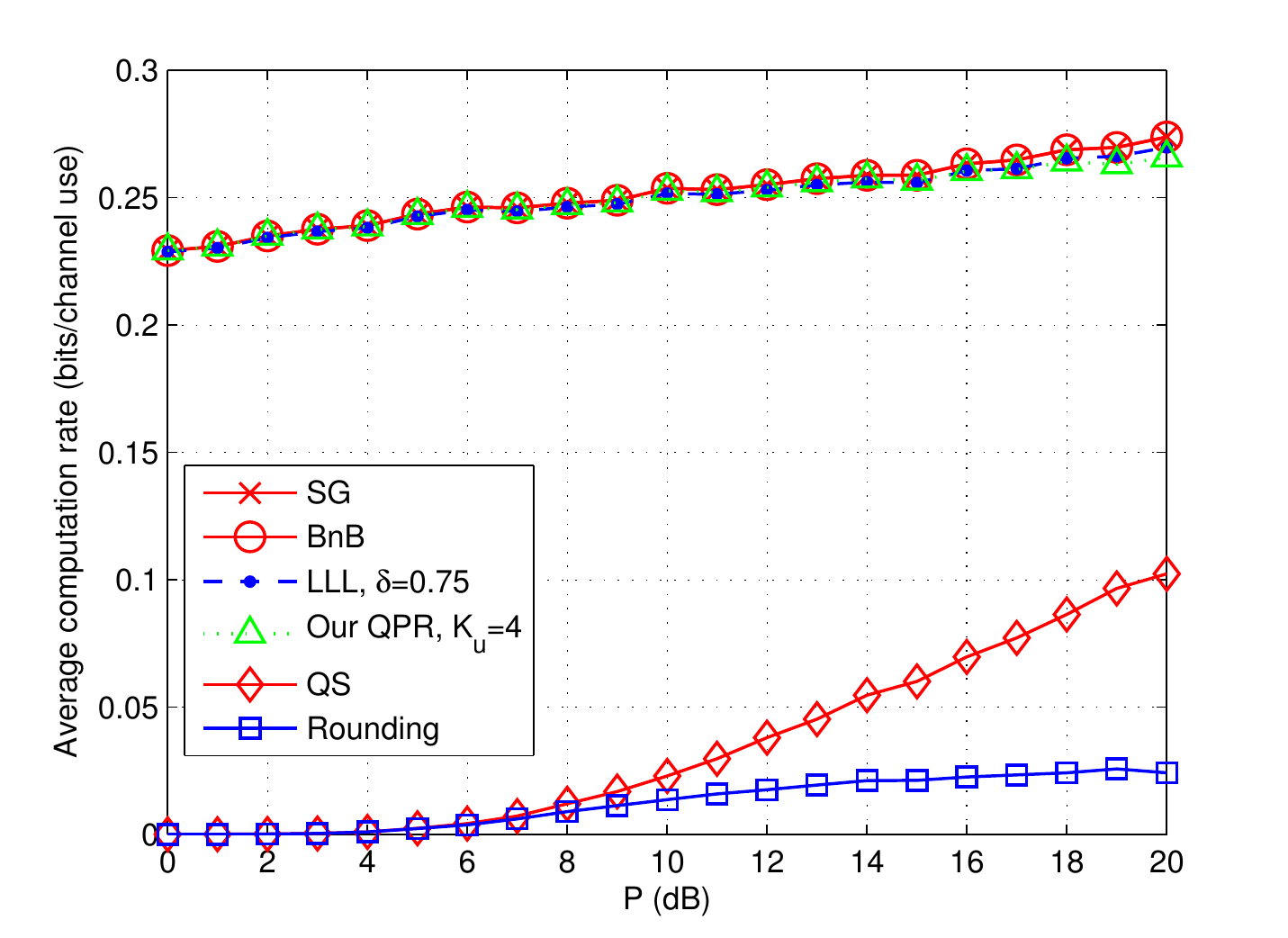}}
    \caption{Average computation rate using different methods.}
    \label{figure:Rate}
\end{figure*}

We then show the effectiveness of our method by comparing the average computation rate with
those of other existing methods.
The methods covered include the following.
\begin{itemize}
\item Our QP relaxation (QPR) method that gives the suboptimal solution.
\item The branch-and-bound (BnB) method proposed by Richter {\it et al.} in \cite{Richter2012}
that provides the optimal solution.
\item The method developed by Sahraei and Gastpar in \cite{Sahraei2014}
that finds the optimal solution with an average-case complexity of $O(P^{0.5}L^{2.5})$ for i.i.d. Gaussian channel entries.
We refer to this method as the ``SG" method for short.
\item The LLL method proposed by Sakzad {\it et al.} in \cite{Sakzad2012},
which is based on the LLL lattice reduction (LR) algorithm.
The parameter $\delta$ in the LLL LR algorithm is set as 0.75 since further increasing $\delta$
towards 1 achieves little gain in the computation rate but requires more computation labor.
Although the LLL LR algorithm has known average complexity for some cases~\cite{Daude1994, Ling2007},
its average complexity for our case is unknown, and the worst-case complexity could be unbounded~\cite{Jalden2008}.
\item The quantized search (QS) method developed by Sakzad {\it et al.} in \cite{Sakzad2012}.
The search consists of two phases: 1) an integer $\alpha_0$ between 1 and $\floor{P^{1/2}}$
that provides the maximum rate is selected as the initial value of the amplifying factor $\alpha$;
2) the amplifying factor is then refined by searching in $[\alpha_0-1,\alpha_0+1]$ with a step size 0.1.
After the amplifying factor $\alpha$ is determined, the coefficient vector $\a$ is set as $\round{\alpha\h}$.
An improved version of the QS method is the quantized exhaustive search (QES) method
proposed in \cite{Sakzad2014}, which was developed for complex-valued channels.
\item The rounding method that simply sets the coefficient vector by rounding the channel vector
to an integer-valued vector.
\end{itemize}

As shown in Figure~\ref{figure:Rate}, the optimal methods, i.e., the BnB method and the SG method,
always provide the highest average computation rates for all dimensions and over the whole SNR regime,
as expected.
The LLL method provides close-to-optimal average computation rates.
Our proposed QPR method also offers close-to-optimal average computation rates
for almost all the dimensions and SNR values considered,
except that its performance degrades a little bit for high dimensions at high SNR
as shown in Figure~\ref{figure:Rate_L16}.
The performance of our QPR method improves slightly
compared with the version we presented in \cite{Zhou2014, Wen2015}.
The reason is that here we initialize the output coefficient vector as $[0,\cdots,0,1]^T$,
which definitely results non-zero computation rate,
while in the previous version the output coefficient vector could result zero computation rate.

\begin{figure}[!p]
    \centering
    \subfloat[$P=0$ dB]{
        \label{figure:Time_PdB0}
        \includegraphics[width=240pt]{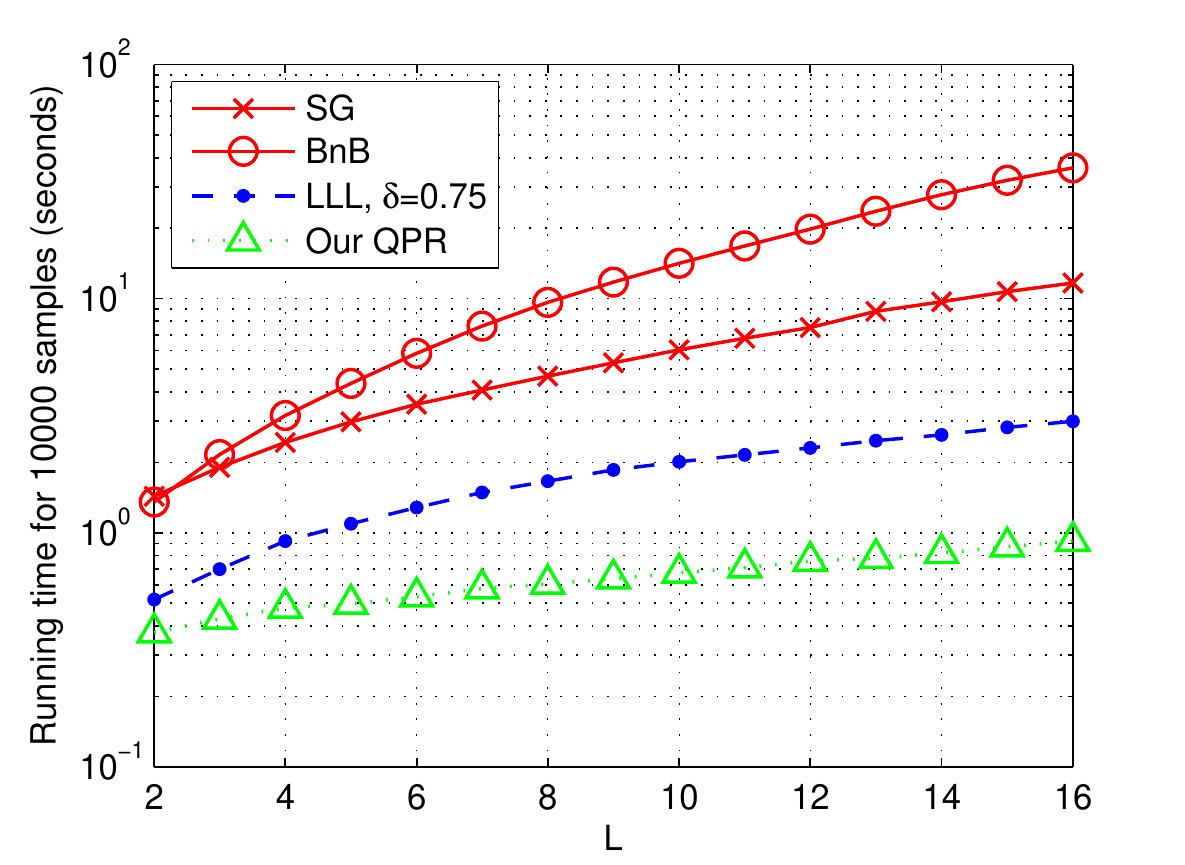}}\\
    \subfloat[$P=10$ dB]{
        \label{figure:Time_PdB10}
        \includegraphics[width=240pt]{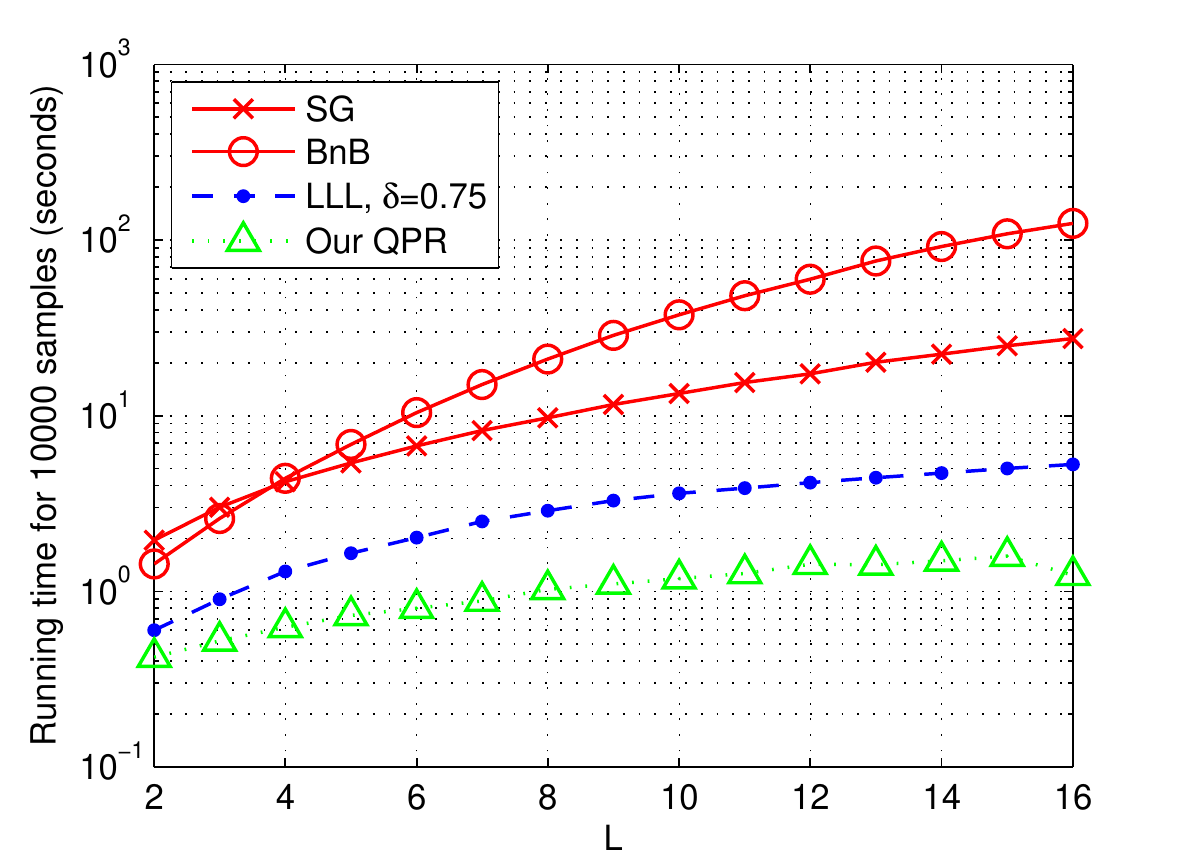}}\\
    \subfloat[$P=20$ dB]{
        \label{figure:Time_PdB20}
        \includegraphics[width=240pt]{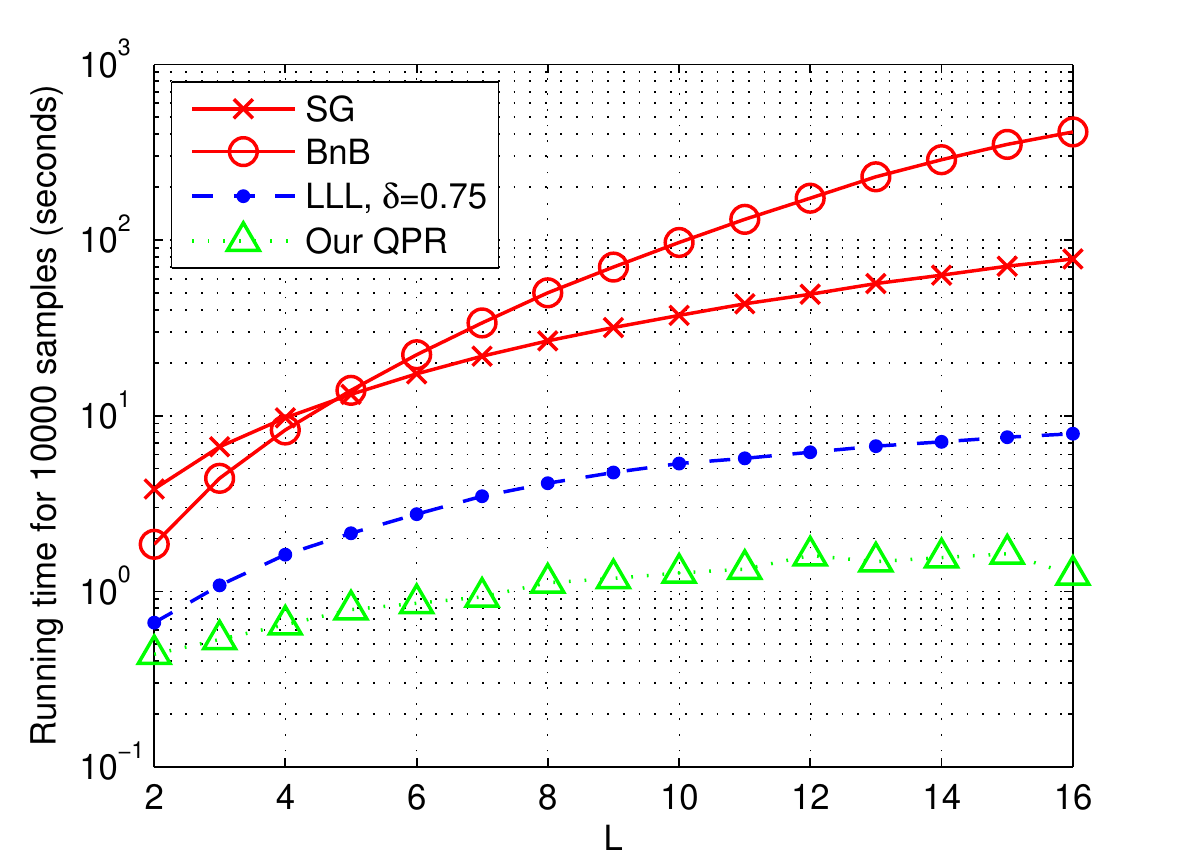}}
    \caption{Running time for $10000$ samples using different methods.}
    \label{figure:Time}
    \vspace{-10pt}
\end{figure}

Finally, we demonstrate the efficiency of the proposed QPR method by comparing
the running time of finding the coefficient vectors for 10000 channel vector samples.
The methods considered include those that provide optimal rates and close-to-optimal rates,
i.e., the SG method, the BnB method, the LLL method, and our QPR method.
The running time varies for different SNR values,
and thus we compare the running time with $P$ being 0dB, 10dB and 20dB.
As shown in Figure~\ref{figure:Time}, the proposed QPR method is much more efficient than
all the other methods, especially for high dimensions.
Specifically, the running time of the optimal methods can be one scale larger than that of the QPR method.

In summary, for i.i.d. Gaussian channel entries,
our proposed QPR method offers close-to-optimal average computation rates
with a much lower complexity than that of the existing optimal and close-to-optimal methods.

\section{Conclusions}
\label{section:Conclusions}
In this paper, we considered the compute-and-forward network coding design problem
of finding the optimal coefficient vector that maximizes the computation rate at a relay,
and developed the quadratic programming (QP) relaxation method that finds a high quality suboptimal solution.
We first revealed some useful properties of the problem, and relaxed the problem to a series of QPs.
We then derived the closed-form solutions of the QPs, which is the key to the efficiency of our method,
and proposed a successive quantization algorithm to quantize the real-valued solutions to integer vectors
that serve as candidates of the coefficient vector.
Finally, the candidate that maximizes the computation rate is selected as the best coefficient vector.
For $L$-dimensional channel vectors with i.i.d. Gaussian entries,
the average-case complexity of the proposed QP relaxation method is of order 1.5
with respect to the dimension $L$.
Numerical results demonstrated that our QP relaxation method offers close-to-optimal computation rates,
and is much more computationally efficient than the existing methods that provide the optimal computation rates
as well as the LLL method that also provides close-to-optimal computation rates.

\bibliographystyle{IEEEtran}
\bibliography{C:/Users/bzhouab/Dropbox/Research/ref/cf_qpr} 

\end{document}